\newcommand{\nat}{\mathbb{N}}
\newcommand{\val}{\mathsf{val}}
\newcommand{\lex}{<_{\mathsf{lex}}}
\newcommand{\llex}{<_{\mathsf{llex}}}
\newcommand{\lleqx}{\leq_{\mathsf{llex}}}
\newcommand{\lgeqx}{\geq_{\mathsf{llex}}}
\newcommand{\lgex}{>_{\mathsf{llex}}}
\newcommand{\ex}{\mathsf{ex}}
\newcommand{\uo}{\mathsf{uo}}
\newcommand{\ur}{\mathsf{ur}}
\newcommand{\canon}{\mathsf{canon}}
\newcommand{\ranked}{\mathsf{ranked}}
\newcommand{\rooted}{\mathsf{root}}
\newcommand{\rooty}{\mathsf{rooty}}
\newcommand{\Center}{\mathsf{center}}
\newcommand{\dflr}{\mathsf{dflr}}
\newcommand{\PSPACE}{\textsc{pspace}}
\newcommand{\EXPTIME}{\textsc{exptime}}
\newcommand{\ALOGTIME}{\textsc{alogtime}}
\newcommand{\Ptime}{\textsc{ptime}}
\newcommand{\ecc}{\mathsf{ecc}}
\newcommand{\rootToY}{\mathsf{rty}}
\newcommand{\centr}{\mathsf{center}}
\newcommand{\even}{\mathsf{even}}
\begin{document}
\pagestyle{plain}
\title{Compressed Tree Canonization
}
\author{
  Markus Lohrey\inst{1}\and
  Sebastian Maneth\inst{2}\and
  Fabian Peternek\inst{2}
}
\institute{
Universit{\"a}t Siegen, Germany
\and
University of Edinburgh, UK
}
\maketitle
\begin{abstract}
Straight-line (linear) context-free tree (SLT) grammars have been used to compactly
represent ordered trees. It is well known that equivalence of 
SLT grammars is decidable in polynomial time.
Here we extend this result and show that isomorphism of
unordered trees given as SLT grammars is decidable in polynomial time.  
The proof constructs a compressed version of the canonical form
of the tree represented by the input SLT grammar.
The result is generalized to unrooted trees by ``re-rooting'' the compressed trees 
in polynomial time. We further show that bisimulation equivalence
of unrooted unordered trees represented by SLT grammars is decidable in polynomial time. 
For non-linear SLT grammars which can have double-exponential compression ratios,
we prove that unordered isomorphism is $\PSPACE$-hard and in $\EXPTIME$.
The same complexity bounds are shown for bisimulation equivalence.
\end{abstract}

\section{Introduction}

Deciding isomorphism between various mathematical objects is an important topic in theoretical
computer science that has led to intriguing open problems like the precise complexity of the graph 
isomorphism problem. An example of an isomorphism problem, where the knowledge seems to be rather complete,
is tree isomorphism. Aho, Hopcroft and Ullman \cite[page~84]{AhoHoUl74} proved that isomorphism
of unordered trees (rooted or unrooted) can be decided in linear time. An unordered tree is 
a tree, where the children of a node are not ordered. The precise complexity of tree isomorphism
was finally settled by Lindell \cite{Lindell92}, Buss \cite{Bus97}, and Jenner et al. \cite{JKMT03}: 
Tree isomorphism is logspace-complete if the trees are represented by pointer structures 
\cite{Lindell92,JKMT03} and \ALOGTIME-complete if  the trees are represented by expressions \cite{Bus97,JKMT03}.
All these results deal with trees that are given explicity (either by an expression or a pointer structure).
In this paper, we deal with the isomorphism problem for trees that are given in a succinct way. Several succinct encoding schemes
for graphs exist in the literature. Galperin and Wigderson \cite{GaWi83} considered graphs that are given by a boolean
circuit for the adjacency matrix. Subsequent work showed that the complexity of a problem undergoes an
exponential jump when going from the standard input representation to the circuit representation; this phenomenon
is known as upgrading, see \cite{DBLP:conf/lata/DasST14} for more details and references. Concerning graph isomorphism,
it was shown in \cite{DBLP:conf/lata/DasST14}  that its succinct version is \PSPACE-hard, even for very restricted classes
of boolean circuits (DNFs and CNFs).

In this paper, we consider another succinct input representation that has turned out to be more amenable
to efficient algorithms, and, in particular, does not show the upgrading phenomenon known for boolean circuits:
straight-line context-free grammars, i.e., context-free grammars that produce a single object. Such grammars
have been intensively studied for strings and recently also for trees. Using a straight-line grammar,
repeated patterns in an string or tree can be abbreviated by a nonterminal which can be used in different contexts. For strings,
this idea is known as grammar-based compression \cite{CLLLPPSS05,loh12}, and it was extended to trees in \cite{BuLoMa07,LohreyMM13}. In fact 
this approach can be also extended to general graphs by using hyperedge replacement graph grammars; the 
resulting formalism is known as hierarchical graph representation and was studied under an algorithmic perspective
in \cite{LeWa92}. 

The main topic of this paper is the isomorphism problem for trees that are succinctly represented by straight-line
context-free tree grammars. An example of such a grammar contains the productions $S \to A_0(a)$, $A_i(y) \to A_{i+1}(A_{i+1}(y))$
for $0 \leq i \leq n-1$, and $A_n(y) \to f(y,y)$ (here $y$ is called a parameter and in general several parameters may occur in a rule). 
This grammar produces a full binary tree of height $2^n$ and hence has
$2^{2^n+1}-1$ many nodes. This example shows that a straight-line context-free tree grammar may produce a tree, whose
size is doubly exponential in the size of the grammar. The reason for this double exponential blow-up is copying: The parameter
$y$ occurs twice in the right-hand side of the production $A_n(y) \to f(y,y)$. If this is not allowed, i.e., if every parameter 
occurs at most once in every right-hand side, then the grammar is called linear. Straight-line linear (resp., non-linear) context-free tree grammars
are called SLT grammars (resp., ST grammars) in this paper. SLT grammars generalize dags (directed acyclic graphs) that
allow to share repeated subtrees of a tree, whereas SLT grammars can also share repeated patterns that are not complete 
subtrees. 

It turned out that many algorithmic problems are much harder
for trees represented by ST grammars than trees represented by SLT grammars. A good example is the membership problem
for tree automata (\Ptime-complete for SLT grammars \cite{DBLP:journals/jcss/LohreyMS12} and \PSPACE-complete for ST grammars \cite{LoMa06}).
A similar situation arises for the isomorphism problem: We prove that
\begin{itemize}
\item the isomorphism problem for (rooted or unrooted) unordered trees that are given by SLT grammars is \Ptime-complete, and
\item the isomorphism problem for (rooted or unrooted) unordered trees that are given by ST grammars is \PSPACE-hard and in \EXPTIME.
\end{itemize}
Our polynomial time algorithm for SLT grammars constructs from a given SLT grammar $G$ a new SLT grammar $G'$ that produces
a canonical representation of the tree produced by $G$. Our canonical representation of a given rooted unordered tree $t$ is the 
ordered rooted tree (in an ordered tree the children of a node are ordered) 
that has the lexicographically smallest preorder traversal among all ordered versions of $t$.
For unrooted SLT-compressed trees, we first compute a compressed representation of the center node of a given
SLT-compressed unrooted tree $t$. Then we compute an SLT grammar that produces the rooted 
version of $t$ that is rooted in the center node. This is also the standard reduction of the unrooted isomorphism problem 
to the rooted isomorphism problem in the  uncompressed setting, but it requires some work to carry out this reduction
in polynomial time in the SLT-compressed setting. 

Our techniques can be also used to show that checking bisimulation equivalence of trees that are represented by SLT grammars
is \Ptime-complete. This generalizes the well-known \Ptime-completeness of bisimulation for dags \cite{BaGaSa92}. 
In this context, it is interesting to note that bisimulation equivalence for graphs that are given by
hierarchical graph representations is \PSPACE-hard and in \EXPTIME{} \cite{DBLP:conf/concur/BrenguierGS12}.

\section{Preliminaries}

For $k \geq 0$ let $[k] = \{1, \ldots, k\}$.
Let $\Sigma$ be an alphabet.
By $T_\Sigma$ we denote the set of all (ordered, rooted) 
trees over the alphabet $\Sigma$. It is defined recursively as
the smallest set of strings $T$ such that if 
$t_1,\dots,t_k\in T$ and $k\geq 0$ then also $\sigma(t_1,\dots,t_k)$ is in $T$.
For the tree $a()$ we simply write $a$.
The set $D(t)$ of {\em Dewey addresse}s of a tree $t=\sigma(t_1,\dots,t_k)$ is the subset
of $\nat^*$ defined recursively as $\{\varepsilon\}\cup \bigcup_{i\in[k]} i \cdot D(t_i)$.
Thus $\varepsilon$ denotes the root node of $t$ and $u\cdot i$ denotes
the $i$-th child of $u$. 
For $u \in D(t)$, we denote by $t[u]\in\Sigma$ the symbol at $u$, i.e., if 
$t=\sigma(t_1,\dots,t_k)$, then $t[\varepsilon]=\sigma$ and $t[i\cdot u] = t_i[u]$.
The {\em size} of the tree $t$ is $|t| = |D(t)|$.

A {\em ranked alphabet} $N$ is a finite set of symbols each of
which equipped with a non-negative integer, called its ``rank''.
We write $A^{(k)}$ to denote that the rank of $A$ is $k$,
and write $N^{(k)}$ for the set of symbols in $N$ that
have rank $k$. 
For an alphabet $\Sigma$ and a ranked alphabet $N$, we denote
by $T_{N\cup\Sigma}$ the set of trees $t$ over $N\cup \Sigma$ with the
property that if $t[u]=A\in N^{(k)}$, then $u\cdot i\in D(t)$ if and only if $i \in [k]$. 
Thus, if a node is labeled by a ranked
symbol, then the rank determines the number of children of the node. 

We fix a special alphabet $Y=\{y_1,y_2,\dots\}$ of \emph{parameters}. For $y_1$ we also write $y$.
The parameters are considered as symbols of rank zero, and by
$T_{\Sigma\cup N}(Y)$ we denote the set of trees from
$T_{\Sigma \cup N \cup Y}$ where each symbol in $Y$ has rank zero.
We write $Y_k$ for the set of parameters $\{y_1,\dots,y_k\}$. 
For trees $t, t_1, \ldots, t_k \in T_{\Sigma\cup N}(Y)$ we denote
by $t[y_j\leftarrow t_j\mid j\in[k]]$ the tree obtained from $t$ 
by replacing in parallel every occurrence of $y_j$ ($j \in [k]$) by $t_j$.

A {\em context-free tree grammar} is a tuple $G=(N,\Sigma,S,P)$ where
$N$ is a ranked alphabet of nonterminal symbols,
$\Sigma$ is an alphabet of terminal symbols with $\Sigma\cap N=\emptyset$,
$S\in N^{(0)}$ is the start nonterminal, and $P$ is a finite set of productions of the form
$A(y_1,\dots,y_k)\to t$ where $A\in N^{(k)}$, $k\geq 0$, and $t\in T_{N\cup\Sigma}(Y_k)$. 
Occasionally, we consider context-free tree grammars without a start nonterminal.
Two trees $\xi,\xi'\in T_{N\cup\Sigma}(Y)$ are in the one-step derivation relation
$\Rightarrow_G$ induced by $G$, if $\xi$ has a subtree
$A(t_1,\dots,t_k)$ with $A\in N^{(k)}$, $k\geq 0$ such that
$\xi'$ is obtained from $\xi$ by replacing this subtree by
$t[y_j\leftarrow t_j\mid j\in[k]]$, where $A(y_1,\dots,y_k)\to t$ is 
a production in $P$. The tree language $L(G)$ produced by $G$ is
$\{t\in T_\Sigma\mid S\Rightarrow_G^* t\}$. 
We assume that $G$ contains no useless productions, i.e., each production
as applied in the derivation of some terminal tree in $T_\Sigma$. 
The {\em size} of the grammar $G$ is $|G| = \sum_{(A(y_1,\ldots,y_k) \to t) \in P} |t|$.
The grammar $G = (N,\Sigma,S,P)$ is {\em deterministic} if for every
$A \in N$ there is exactly one production of the form $A \to t$.
The  grammar $G$ is {\em acyclic}, if there is a linear order $<$ on $N$ such that
$A < B$ whenever $B$ occurs in a tree $t$ with $(A \to t) \in P$. A 
deterministic and acyclic grammar is called \emph{straight-line}. Note 
that $|L(G)|=1$ for a straight-line grammar. We denote the unique tree $t$ produced
by the straight-line tree grammar $G$ by $\val(G)$. Moreover, for a tree
$t \in T_{\Sigma \cup N}(Y)$  we denote with $\val_G(t) \in T_{\Sigma}(Y)$ the unique tree obtained
from $t$ by applying productions from $G$ until only terminal symbols from $\Sigma$
occur in the tree. If $G$ is clear from the context, we simply write $\val(t)$ for $\val_G(t)$.
The grammar $G$ is \emph{linear} if for every production $(A \to t) \in P$ and every $y \in Y$,
$y$ occurs at most once in $t$.

For a \emph{straight-line linear context-free tree grammar}  we say \emph{SLT grammar}.
For a (not necessarily linear) \emph{straight-line context-free tree grammar} we say \emph{ST grammar}. 
Most of this paper is about SLT grammars, only in Section~\ref{sect:nonlinear}
we deal with (non-linear) ST grammars. SLT grammars generalize rooted node-labelled dags (directed acyclic graph),
where the tree defined by such a dag is obtained by unfolding the dag starting from the root (formally,
the nodes of the tree are the directed paths in the dag that start in the root).
A dag can be viewed as an SLT grammar, where all nonterminals have rank $0$ (the nodes of the dag correspond
to the nonterminal of the SLT grammar). Dags are less succinct than SLT grammars (take the tree $f^N(a)$ for $N = 2^n$),
which in turn are less succinct than general ST grammars (take a full binary tree of height $2^n$).
We need the following fact:

\begin{lemma} \label{lemma:ST-SLT}
A given ST grammar $G$ can be transformed in exponential time into an equivalent SLT grammar.
\end{lemma}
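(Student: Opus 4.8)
The plan is to eliminate copying of parameters step by step, turning a general ST grammar into a linear one at the cost of an exponential blow-up. The key observation is that the only obstruction to linearity is a parameter $y_j$ occurring several times in the right-hand side of some production $A(y_1,\dots,y_k) \to t$. I would measure progress by the total number of parameter occurrences over all right-hand sides, or more robustly by the multiset of ``copy numbers'' of nonterminals (how often some parameter is duplicated), and argue that the transformation terminates.

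First I would put $G$ into a normal form in which every right-hand side has size at most, say, two below the topmost symbol, i.e.\ each production is of the shape $A(y_1,\dots,y_k) \to \sigma(\zeta_1,\dots,\zeta_m)$ or $A(y_1,\dots,y_k) \to B(\zeta_1,\dots,\zeta_\ell)$ or $A(y_1,\dots,y_k)\to y_j$, where each $\zeta_i$ is either a parameter or a nonterminal applied to parameters; this is a standard decomposition introducing fresh nonterminals and only increases $|G|$ polynomially. Then the core step is the following substitution/inlining procedure. Pick a nonterminal $B$ of rank $\ell\ge 1$ such that, in some production, an argument position of $B$ receives a parameter that is \emph{also} used elsewhere, i.e.\ $B$ is being fed a duplicated parameter. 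Expand the (unique, since $G$ is deterministic) production for $B$ by substituting its actual arguments, i.e.\ replace the call $B(\zeta_1,\dots,\zeta_\ell)$ by $\val$-one-step unfolding $\rhs(B)[y_i \leftarrow \zeta_i]$. After enough such unfoldings directed by the acyclicity order, every nonterminal that is ever applied to a duplicated parameter can be fully inlined, and the duplication of $y$ is pushed down until it reaches a terminal symbol $\sigma(\dots,y,\dots,y,\dots)$; such terminal duplications are harmless because they do not violate linearity — linearity forbids a parameter occurring twice in a right-hand side, but here the issue is only the \emph{nonterminal} calls, and once all nonterminals applied to repeated parameters are inlined the grammar is linear. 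Acyclicity of $G$ guarantees this process terminates: process nonterminals in topological order from the bottom, and once a nonterminal $B$ has been ``saturated'' it is never reintroduced below itself.

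The exponential bound comes from the unfolding: inlining the production for $B$ into a right-hand side can square the number of occurrences of nonterminals below it, and iterating this along a chain of $n$ nonterminals yields a $2^{O(n)}$-size right-hand side — but no worse, since at each of the $|N|$ levels the size at most multiplies by a constant times the current maximum right-hand-side size, giving a bound of the form $|G|^{O(1)} \cdot c^{|N|}$. Determinism and straight-lineness are preserved throughout (we only substitute and introduce fresh nonterminals, never cycles), so the result is an SLT grammar, and since $\val$ is invariant under one-step unfolding the new grammar is equivalent to $G$. The running time is dominated by writing down the output, hence exponential.

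The main obstacle I expect is making the inlining schedule precise enough that termination and the exponential (rather than, say, non-elementary) size bound are both clear: one has to be careful that unfolding $B$ into a context that \emph{itself} still contains un-inlined nonterminals applied to duplicated parameters does not cause a runaway cascade. The fix is to strictly follow the acyclic order: fully process and freeze the topologically-lowest nonterminals first, so that when we inline $B$ the things we copy are already linear (or are terminal-only), bounding the blow-up at each level by a constant factor over the previous level and thus by $2^{O(|N|)}$ overall. A secondary technical point is handling productions of the form $A(\dots)\to B(\dots)$ where $B$ has rank $0$ versus positive rank uniformly; the normal form above makes this bookkeeping routine.
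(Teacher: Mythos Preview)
Your proposal has a genuine gap: you misidentify what linearity requires. A production $A(y)\to\sigma(y,y)$ with $\sigma$ terminal is \emph{not} linear --- the definition says each parameter may occur at most once in a right-hand side, full stop. So your claim that ``terminal duplications are harmless'' is simply false, and your inlining procedure (which only triggers when a \emph{nonterminal} is fed a duplicated parameter) does nothing on the paper's own running example $A_n(y)\to f(y,y)$, $A_i(y)\to A_{i+1}(A_{i+1}(y))$ for $i<n$, $S\to A_0(a)$: every production except $A_n$'s is already linear, there is no nonterminal to inline inside $A_n$'s right-hand side, and yet the grammar remains non-linear after your procedure halts.

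Your size analysis also fails. If you try to repair the above by fully inlining all positive-rank nonterminals bottom-up, the blow-up is double-exponential, not exponential: in the same example, inlining $A_n$ into $A_{n-1}$ gives a right-hand side with $4$ occurrences of $y$, inlining that into $A_{n-2}$ gives $16$, and after $k$ levels you have $2^{2^k}$ occurrences. Your sentence ``inlining \ldots can square the number of occurrences \ldots iterating along a chain of $n$ nonterminals yields a $2^{O(n)}$-size right-hand side'' is arithmetically wrong --- squaring $n$ times gives $2^{2^{O(n)}}$. The paper's proof avoids this by a genuinely different idea: merge all occurrences of each parameter $y_i$ in a right-hand side into a single node \emph{before} substituting, turning each right-hand side into a dag. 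The grammar then becomes a hyperedge-replacement (hierarchical) graph grammar that produces a dag; substitution now identifies one node rather than copying a subtree, so the final dag has only single-exponential size. That dag is already an SLT grammar (all nonterminals of rank~$0$), and its unfolding is $\val(G)$. The ``share first, then substitute'' step is precisely what your inlining approach is missing.
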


\begin{proof}
In fact, an ST grammar $G$ can be transformed in exponential time into an equivalent dag.
This dag is obtained by viewing the right hand side $t(x_1, \ldots, x_n)$ of a $G$-production 
$A(x_1, \ldots, x_n) \to t(x_1, \ldots, x_n)$ as a dag, by merging for all $i \in [k]$
all $x_i$-labelled leafs into a single $x_i$-labelled node. In this way, $G$ becomes
a so called hyperedge replacement graph grammar (or hierarchical graph definition in the sense
of \cite{LeWa92}) that produces a dag of exponential size,
which can be  constructed in exponential time from $G$, and whose unfolding is $\val(G)$.
\qed
\end{proof}
A \emph{context} is a tree in $T_{\Sigma \cup N}(\{y\})$ with exactly one
occurrence of $y$. We denote with $\mathcal{C}_{\Sigma \cup N}$ the set
of all contexts and write $\mathcal{C}_{\Sigma}$ for the set of contexts 
that contain only symbols from $\Sigma$.
For a context $t(y)$ and a tree $t'$ we write $t[t']$ for $t[y\leftarrow t']$.
Occasionally, we also consider SLT grammars, where the start nonterminal 
belongs to $N^{(1)}$, i.e., has rank $1$. We call such a grammar a \emph{$1$-SLT grammar}.
Note that $\val(G)$ is a context if $G$ is a  $1$-SLT grammar~$G$.

In the literature, SLT grammars are usually  defined over a ranked terminal
alphabets. The following lemma is proved in~\cite{DBLP:journals/jcss/LohreyMS12}; 
the proof immediately carries over to our setting where $\Sigma$ is not ranked. 

\begin{lemma}\label{obs}
One can transform in polynomial
time an SLT grammar into an equivalent SLT grammar, where each production has one of the following
four types (where $\sigma\in\Sigma$ and $A,B,C,A_1,\dots,A_k\in N$):
\begin{enumerate}
\item[(1)] $A\to\sigma(A_1,\dots,A_k)$,
\item[(2)] $A\to B(C)$,
\item[(3)] $A(y)\to\sigma(A_1,\dots,A_i,y,A_{i+1},\dots,A_k)$, or
\item[(4)] $A(y)\to B(C(y))$. 
\end{enumerate}
\end{lemma}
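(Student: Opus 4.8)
The plan is to proceed in two stages: first reduce the ranks of all nonterminals to at most $1$, and then chop up the right-hand sides of the resulting monadic SLT grammar into the four elementary shapes (1)--(4).

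For the first stage I would invoke the parameter-reduction construction of \cite{DBLP:journals/jcss/LohreyMS12}, which transforms an arbitrary SLT grammar in polynomial time into an equivalent one in which every nonterminal has rank at most $1$. The idea underlying that construction is to pick a nonterminal of maximal rank $k\ge 2$ and eliminate one of its parameters, roughly by specializing the owning production to each of the (finitely many) arguments ever passed in that parameter position; the only point one really has to check to reuse the construction here is that it never inspects the terminal symbols, treating them as opaque labels and manipulating only nonterminals and the tree structure of right-hand sides, so it carries over verbatim to the case where $\Sigma$ is not ranked.

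For the second stage, assume $G$ is monadic. First perform a cleanup pass that removes the ``unit'' productions $A\to B$ and $A(y)\to B(y)$ (both sides nonterminals) and the identity production $A(y)\to y$, by inlining along the acyclic order $<$ of $G$, processing nonterminals from largest to smallest; since each right-hand side is copied at most once per inlining, this is polynomial. Now decompose the right-hand side $t$ of each remaining production recursively on $|t|$. For a nonterminal in $N^{(0)}$: if $t=\sigma(t_1,\dots,t_k)$, introduce a fresh nonterminal $A_i\in N^{(0)}$ with production $A_i\to t_i$ for every $t_i$ that is not already a single nonterminal, and replace that $t_i$ by $A_i$, so that the production for $A$ becomes type (1), then recurse on the (strictly smaller) new productions; if instead $t=B(t')$ with $B\in N^{(1)}$, pull $t'$ into a fresh $C\in N^{(0)}$ (unless $t'$ already is one), obtaining $A\to B(C)$ of type (2), and recurse on $C\to t'$. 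For a nonterminal in $N^{(1)}$ with production $A(y)\to t$, where $y$ occurs exactly once in $t$: if $t=\sigma(t_1,\dots,t_k)$ with $y$ occurring in $t_j$, replace every $t_i$ with $i\ne j$ by a fresh rank-$0$ nonterminal as before; if now $t_j=y$ the production is type (3) and we recurse; if $t_j=D(y)$ for a nonterminal $D\in N^{(1)}$ we obtain type (4) directly with a fresh type-(3) nonterminal for the $\sigma$-context; otherwise we split $t$ as a type-(4) production $A(y)\to D(E(y))$, where $D(y)\to\sigma(A_1,\dots,A_{j-1},y,A_{j+1},\dots,A_k)$ is (after recursion on the $A_i$) of type (3) and $E(y)\to t_j$ is recursed on with $|t_j|<|t|$; the case $t=B(t')$ with $B\in N^{(1)}$ is analogous and again yields type (4). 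A final cleanup removes the few unit productions this may introduce. Each syntactic position of each original right-hand side contributes only $O(1)$ new productions of $O(1)$ size, so the whole stage runs in linear time.

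The only genuinely delicate step is the polynomial size bound in the parameter-reduction stage: the naive elimination of a parameter by substituting each passed argument into the owning production cascades through the grammar and can blow up exponentially, and avoiding this is the technical heart of \cite{DBLP:journals/jcss/LohreyMS12}, which I would simply cite. By contrast the decomposition stage is routine, being the tree analogue of converting a context-free string grammar to Chomsky normal form, and the passage from ranked to unranked terminal alphabets is purely cosmetic: a terminal of ``arity'' $k$ occurring in a right-hand side is absorbed uniformly by the type-(1) shape $A\to\sigma(A_1,\dots,A_k)$, or the type-(3) shape, for that particular $k$.
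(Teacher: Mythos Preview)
Your proposal is correct and aligns with the paper's approach: the paper does not give its own proof but simply cites \cite{DBLP:journals/jcss/LohreyMS12} and remarks that the construction carries over unchanged to unranked terminal alphabets. You make exactly the same citation and the same observation about unranked $\Sigma$, and additionally sketch the two stages (parameter reduction, then Chomsky-style decomposition) that underlie the cited construction, which is more detail than the paper itself provides.
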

In particular, note that $N$ contains only nonterminals of rank at most $1$.

In the following, we will only deal with SLT grammars $G$ having the property from Lemma~\ref{obs}.
For $i\in[4]$, we denote with $G(i)$ the SLT grammar (without start nonterminal) consisting
of all productions of $G$ of type  $(i)$ from Lemma~\ref{obs}.

\medskip
\noindent
{\bf Region Restrictions.}\quad
A \emph{straight-line program} (SLP) can be seen as a 1-SLT grammar $G=(N,\Sigma,S,P)$
containing only productions of the form $A(y) \to B(C(y))$  and $A(y) \to \sigma(y)$ with 
$B,C \in N$ and $\sigma\in\Sigma$.
Thus, $G$ contains 
ordinary rules of a context-free string grammar in Chomsky normal form
(but written as monadic trees). Intuitively, if 
$\val(G)=a_1(\cdots a_n(y)\cdots)$ then $G$ produces the string $a_1\cdots a_n$
and we also write $\val(G) = a_1\cdots a_n$.
For a string $w=a_1\cdots a_n$ and two numbers $l,r\in[n]$ with $l\leq r$
we denote by $w[l,r]$ the substring $a_l a_{l+1}\cdots a_r$.
The following result is a special case of \cite{Hag00}, where it is shown that a so called
composition system (an SLP extended with right-hand sides of the form $A[l,r]$ for positions $l \leq r$)
can be transformed into an ordinary SLP.

\begin{lemma}\label{lm:region}
For a given SLP $G$ and two binary encoded numbers
$l, r \in[|\val(G)|]$ with $l\leq r$ one can compute in polynomial time an SLP $G'$ such that
$\val(G')=\val(G)[l,r]$.
\end{lemma}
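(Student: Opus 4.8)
The plan is to reduce substring extraction to two simpler operations on SLPs: computing an SLP for a prefix of prescribed length, and computing an SLP for a suffix starting at a prescribed position. Since $\val(G)[l,r]$ is exactly the suffix of $\val(G)[1,r]$ starting at position $l$, it then suffices to perform each operation in polynomial time: first build an SLP $G_1$ with $\val(G_1)=\val(G)[1,r]$, and then apply the suffix construction to $G_1$ at position $l$ to obtain $G'$. (This is essentially the conversion of a composition system into an ordinary SLP, which is the content of the cited result of Hagenah; I would give the self-contained argument.)

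First I would precompute, for every nonterminal $A$ of $G$, the length $n_A=|\val(A)|$, stored in binary; these fit in $|G|$ bits and are computed bottom-up along the acyclic order, using $n_A=n_B+n_C$ for a rule $A(y)\to B(C(y))$ and $n_A=1$ for $A(y)\to\sigma(y)$. Given a position $p$ with $1\le p\le n_S$ for the start nonterminal $S$, I would compute a nonterminal $\mathrm{pre}(A,p)$ producing the length-$p$ prefix of $\val(A)$ by recursion in the acyclic order: if $p\ge n_A$, then $\mathrm{pre}(A,p)$ is $A$ itself (no new nonterminal); otherwise $A(y)\to B(C(y))$ and, if $p\le n_B$, recurse and set $\mathrm{pre}(A,p):=\mathrm{pre}(B,p)$, while if $p> n_B$, recurse to obtain $D:=\mathrm{pre}(C,p-n_B)$ and introduce a fresh nonterminal $\mathrm{pre}(A,p)$ with the rule $\mathrm{pre}(A,p)(y)\to B(D(y))$. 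Symmetrically, $\mathrm{suf}(A,p)$ produces $\val(A)[p,n_A]$: if $p=1$ return $A$; for $A(y)\to B(C(y))$ and $p>1$, if $p>n_B$ recurse and set $\mathrm{suf}(A,p):=\mathrm{suf}(C,p-n_B)$, and if $p\le n_B$ recurse to obtain $D:=\mathrm{suf}(B,p)$ and add the fresh rule $\mathrm{suf}(A,p)(y)\to D(C(y))$; for $A(y)\to\sigma(y)$ and $p=1$ return $A$. The start nonterminal of the resulting grammar is the nonterminal returned by the top-level call (using $\mathrm{pre}(S,r)$, resp.\ $\mathrm{suf}(\cdot,l)$), and all rules of the input grammar are kept.

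The main point to verify — and the only real obstacle — is the polynomial size bound. The crucial observation is that each of these recursions is a single chain of calls rather than a branching recursion: a call on $A$ triggers exactly one recursive call, and that call is on a nonterminal that occurs in the right-hand side of $A$'s rule and is therefore strictly smaller than $A$ in the acyclic order. Hence the recursion has depth at most $|N|$, creates at most one fresh nonterminal per level, and every right-hand side produced uses only these fresh nonterminals together with original nonterminals of $G$; so the output grammar has size $O(|G|)$, and applying the prefix and then the suffix construction keeps the total size $O(|G|)$. All numbers handled ($n_A$, the current position, and the differences $p-n_B$) are binary of polynomially many bits, with only additions, subtractions, and comparisons, so the whole construction runs in polynomial time. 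The one syntactic subtlety is to avoid chain rules $A(y)\to B(y)$, which are not among the admissible SLP rule shapes; this is why the recursion returns the \emph{name} of the nonterminal standing for the result (possibly an original one, or one created at a deeper level) instead of always forcing a new rule.
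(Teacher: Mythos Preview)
Your argument is correct. The paper itself does not give a proof of this lemma; it simply cites Hagenah's thesis, noting that the statement is a special case of the fact that a composition system (an SLP extended by rules of the form $A\to B[l,r]$) can be converted into an ordinary SLP in polynomial time. What you have written is precisely the direct, self-contained instance of that construction specialised to a single substring extraction: one top-down pass for a prefix, one for a suffix, each adding at most one fresh nonterminal per level of the acyclic order. The non-branching nature of the recursion is indeed the point that guarantees linear growth, and you identify it correctly. One tiny remark: when you apply the suffix construction to $G_1$, the relevant depth bound is $|N_{G_1}|\le 2|N|$ rather than $|N|$, but this of course still yields $O(|G|)$ total size, so your conclusion stands.
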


\section{Isomorphism of Unrooted SLT-Represented Trees}

For a tree $t$ we denote with $\mathsf{uo}(t)$  the unordered rooted version
of $t$. It is the node-labeled directed graph $(V, E, \lambda)$ where
$V=D(t)$ is the set of nodes, $$E=\{(u, u\cdot i) \mid i \in \nat, u \in \nat^*, u\cdot i \in D(t)\}$$
is the edge relation, and $\lambda$ is the node-labelling function with $\lambda(u) = t[u]$.
For an SLT grammar $G$, we also write $\val_{\mathsf{uo}}(G)$ for $\mathsf{uo}(\val(G))$.

In this section, we present a polynomial time algorithm for deciding 
$\mathsf{uo}(\val(G_1)) = \mathsf{uo}(\val(G_2))$ for two given
SLT grammars $G_1$ and $G_2$. For this, we will first define 
a canonical representation of a given tree $t$, briefly $\canon(t)$,
such that $\uo(s)$ and $\uo(t)$ are isomorphic if and only if
$\canon(s) = \canon(t)$. Then, we show how to produce for a given SLT grammar
$G$ in polynomial time an SLT grammar for $\canon(\val(G))$.

For reasons that will become clear in a moment we have to restrict
to trees $t \in T_\Sigma$ that have the following property: For all
$u,v \in D(t)$, if $t[u] = t[v]$ then $u$ and $v$ have the same number
of children (nodes with the same label have the same number of children).
Such trees are called {\em ranked trees}. 
For the purpose of deciding the isomorphism problem for unorderd
SLT-represented trees this is not a real restriction. Denote for a tree
$t \in T_\Sigma$ the ranked tree $\ranked(t)$ such that $D(t) = D(\ranked(t))$
and for every $u \in D(t)$ with $t[u]=\sigma$: if $u$ has $k$ children in $t$, then 
$\ranked(t)[u] = \sigma_k$, where $\sigma_k$ is a new symbol. 
Then we have:
\begin{itemize}
\item $\uo(s)$ and $\uo(t)$ are isomorphic if and only if $\uo(\ranked(s))$ and $\uo(\ranked(t))$
are isomorphic.
\item For an SLT grammar $G$ we construct in polynomial time the SLT grammar
$\ranked(G)$ obtained from $G$ by changing every production $A\to t$ into 
$A\to\ranked(t)$, where $\ranked$ is extended to trees 
over $\Sigma$ and nonterminals by defining
$\ranked(t)[u]=t[u]$ if $t[u]$ is a nonterminal. Then we have
$\val(\ranked(G)) = \ranked(\val(G))$.
\end{itemize}
Hence, in the following we will only consider ranked trees, and all SLT grammars will produce 
ranked trees as well.

\subsection{Length-Lexicographical Order and Canons}\label{sect:lex}

Let us fix the alphabet $\Sigma$.
For a tree $t\in T_\Sigma$ we denote by $\dflr(t)$ its depth-first left-to-right
traversal string in $\Sigma^*$. It is defined as
$$\dflr(\sigma(t_1,\dots,t_k))=\sigma\, \dflr(t_1)\cdots \dflr(t_k)$$ for
every $\sigma\in\Sigma$, $k\geq 0$, and $t_1,\dots,t_k\in T_\Sigma$. 
Note that for ranked trees $s$ and $t$ it holds that: 
$\dflr(s) = \dflr(t)$
if and only if 
$s=t$. This is the reason for restricting to ranked trees:
for unranked trees this equivalence fails. For instance, $a(a(a))$ and 
$a(a,a)$ have the same depth-first left-to-right
traversal string $aaa$.

Let $<_\Sigma$ be an order on $\Sigma$; it induces the \emph{lexicographical ordering} $\lex$
on equal-length strings $u,w\in\Sigma^*$ as: $u\lex w$ if and only if
there exist $p,u',w'\in\Sigma^*$ and letters $a,b\in\Sigma$ with $a<_\Sigma b$ such that
$u=pau'$ and $w=pbw'$.
The \emph{length-lexicographical ordering} $\llex$ on $\Sigma^*$ is defined by
$u\llex w$ if and only if (i) $|u|<|w|$ or (ii) $|u|=|w|$ and $u\lex w$.
We extend the definition of $\llex$ to trees $s,t$ over $\Sigma$ by 
$s\llex t$ if and only if $\dflr(s)\llex \dflr(t)$.

\begin{lemma}\label{lm:order}
Let $G,H$ be SLT grammars.
It is decidable in polynomial time whether or not (1) $\val(G)\llex\val(H)$
and (2) whether or not $\val(G)=\val(H)$. 
\end{lemma}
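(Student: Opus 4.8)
The plan is to reduce both parts of Lemma~\ref{lm:order} to known facts about compressed strings. The key observation is that for ranked trees $s,t$ we have $s = t$ if and only if $\dflr(s) = \dflr(t)$, and $s \llex t$ if and only if $\dflr(s) \llex \dflr(t)$; so it suffices to produce SLPs for the strings $\dflr(\val(G))$ and $\dflr(\val(H))$ and then compare these compressed strings. First I would show that from an SLT grammar $G$ (in the normal form of Lemma~\ref{obs}) one can compute in polynomial time an SLP $G^{\dflr}$ with $\val(G^{\dflr}) = \dflr(\val(G))$. To this end, for every nonterminal $A$ of rank $0$ introduce a string nonterminal $\overline{A}$ standing for $\dflr(\val_G(A))$, and for every nonterminal $A$ of rank $1$ introduce two string nonterminals $\overline{A}_{\ell}$ and $\overline{A}_r$ so that $\dflr(\val_G(A)[y \leftarrow s]) = \overline{A}_{\ell}\, \dflr(s)\, \overline{A}_r$ for every tree $s$; this is well-defined because $\val_G(A)$ contains exactly one occurrence of $y$, so its traversal splits into the part before $y$ and the part after $y$. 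The productions translate as follows: a rule $A \to \sigma(A_1,\dots,A_k)$ becomes $\overline{A} \to \sigma\, \overline{A_1} \cdots \overline{A_k}$; a rule $A \to B(C)$ becomes $\overline{A} \to \overline{B}_{\ell}\, \overline{C}\, \overline{B}_r$; a rule $A(y) \to \sigma(A_1,\dots,A_i,y,A_{i+1},\dots,A_k)$ becomes $\overline{A}_{\ell} \to \sigma\, \overline{A_1}\cdots \overline{A_i}$ and $\overline{A}_r \to \overline{A_{i+1}} \cdots \overline{A_k}$; and a rule $A(y) \to B(C(y))$ becomes $\overline{A}_{\ell} \to \overline{B}_{\ell}\, \overline{C}_{\ell}$ and $\overline{A}_r \to \overline{C}_r\, \overline{B}_r$. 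Since each SLT production of type $(2)$ or $(4)$ produces a string production with at most three symbols on the right-hand side, and types $(1)$ and $(3)$ produce right-hand sides of size linear in the original, $G^{\dflr}$ has size polynomial in $|G|$ and is clearly acyclic and deterministic; a final step brings it into Chomsky normal form, i.e., into the SLP format used in the excerpt. Acyclicity is inherited because $<$ on the nonterminals of $G$ lifts to an ordering that puts all copies $\overline{A}, \overline{A}_{\ell}, \overline{A}_r$ below all copies of any $B$ with $A < B$.

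Once the SLPs $G^{\dflr}$ and $H^{\dflr}$ are available, part~(2) follows from polynomial-time equality testing for SLP-compressed strings, which is the classical result of Plandowski (and can also be obtained as a special case of the equivalence test for SLT grammars mentioned in the introduction, since SLPs are a special case of SLT grammars). For part~(1), first compare the lengths $|\val(G^{\dflr})|$ and $|\val(H^{\dflr})|$, which are computable in polynomial time by evaluating the grammar over the integers; if they differ, $\llex$ is decided immediately. If they are equal, one needs to decide the lexicographic order of two equal-length SLP-compressed strings. This is done by computing the length $p$ of the longest common prefix of the two strings: using binary search together with the substring extraction of Lemma~\ref{lm:region} and the equality test, one can find $p$ with polynomially many equality queries on prefixes (each prefix being extracted as an SLP via Lemma~\ref{lm:region}); then one extracts the single symbols at position $p+1$ in each string — again via Lemma~\ref{lm:region}, taking $l = r = p+1$ — and compares them under $<_\Sigma$. (Alternatively one invokes the known fact that the lexicographic comparison of SLP-compressed strings is in polynomial time directly.)

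I expect the routine but slightly delicate part to be the correctness of the translation $G \mapsto G^{\dflr}$: one must verify by induction along the acyclic order of $G$ that $\overline{A}$ and $\overline{A}_{\ell}, \overline{A}_r$ indeed compute the claimed strings, paying attention to the fact that in a type-$(4)$ rule $A(y) \to B(C(y))$ the parameter passed to $B$ is the whole context $C(y)$, so the ``after $y$'' part of $A$ is $\overline{C}_r$ followed by $\overline{B}_r$ and the ``before $y$'' part is $\overline{B}_{\ell}$ followed by $\overline{C}_{\ell}$ — getting the order of concatenation right is the only place where a sign error could creep in. There is no real conceptual obstacle here; the whole lemma is essentially a compilation of tree compression into string compression plus off-the-shelf algorithmics on SLPs.
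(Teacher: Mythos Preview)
Your proposal is correct and follows essentially the same approach as the paper: reduce to SLPs for the $\dflr$-strings (the paper cites \cite{BuLoMa07} for this step rather than spelling out the translation you give), use the known polynomial-time SLP equality test for part~(2), and for part~(1) compare lengths and then locate the first differing position via binary search combined with Lemma~\ref{lm:region}. The only cosmetic difference is that the paper's binary search halves the string and tests which half contains a mismatch, whereas you phrase it as binary search on prefix lengths; both are the same idea.
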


\begin{proof}
Point (2) was shown in \cite{BuLoMa07} by computing from $G, H$ in polynomial time 
SLPs $G', H'$ with $\val(G') = \dflr(\val(G))$ and $\val(H')=\dflr(\val(H))$.
Equivalence of SLPs can be decided in polynomial time; this was proved
independently in~\cite{DBLP:journals/tcs/HirshfeldJM96,DBLP:journals/algorithmica/MehlhornSU97,DBLP:conf/esa/Plandowski94}, 
cf.~\cite{loh12}. 

To show (1), we compute in two single bottom-up runs the numbers $n_1=|\val(G')|$ and $n_2 = |\val(H')|$.
If $n_1 \neq n_2$ we are done; so assume that $n = n_1 = n_2$. Next, we compute the first position for which
the strings $\val(G')$ and $\val(H')$ differ. This is done via binary search
and polynomially many equivalence tests: We compute $m=\lceil n/2\rceil$ and,
using Lemma~\ref{lm:region}, construct SLPs $G_1$ and $G_2$ for $\val(G')[1,m]$ and $\val(G')[m+1,n]$, respectively, and
SLPs $H_1$ and $H_2$ for $\val(H')[1,m]$ and $\val(H')[m+1,n]$, respectively. We proceed with $G_1$ and $H_1$ if
$\val(G_1) \neq \val(H_1)$,  otherwise we proceed with $G_2$ and $H_2$. After $c\leq\lceil\log(n)\rceil$ many
steps we obtain SLPs $G_c,H_c$ representing the first position for which
$\val(G')$ and $\val(H')$ differ. We compute the terminal symbols 
$g,h$ with $\val(G_c)=\{g\}$ and $\val(H_c)=\{h\}$ and
determine whether or not $g<_\Sigma h$.
\qed
\end{proof}
For a tree $t\in T_\Sigma$ we define its \emph{canon} $\canon(t)$ as
the smallest tree $s$ with respect to~$\llex$ such that $\uo(s)$ is isomorphic to $\uo(t)$.
Clearly, if $\canon(t)=t$ then also $\canon(t')=t'$ for every subtree $t'$ of $t$.
Hence, in order to determine $\canon(t)$ for  $t=\sigma(t_1,\dots,t_k)$ 
($\sigma\in\Sigma$, $k\geq 0$) let $c_i = \canon(t_i)$ for $i\in[k]$ and 
let $c_{i_1}\lleqx c_{i_2} \lleqx \dots \lleqx c_{i_k}$ be the length-lexicographically ordered
list of the canons $c_1,\dots,c_k$. 
Then $\canon(t) = \sigma(c_{i_1},\dots,c_{i_n})$. The following lemma can be easily
shown by an induction on the tree structure:

\begin{lemma} \label{lm:iso-canon}
Let $s,t \in T_\Sigma$. Then $\mathsf{uo}(s)$ and $\mathsf{uo}(t)$ are isomorphic 
if and only if $\canon(s) = \canon(t)$.
\end{lemma}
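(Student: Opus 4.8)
The plan is to prove both directions of the equivalence by induction on the structure of the tree, using the recursive definition of $\canon$ given just above the lemma. The key observation is that $\canon$ is defined precisely so that it normalizes the order of children at every node, so two trees with isomorphic unordered versions get collapsed to the same ordered tree.

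\emph{($\Leftarrow$)} This direction is immediate: if $\canon(s) = \canon(t)$, then $\uo(s)$ is isomorphic to $\uo(\canon(s)) = \uo(\canon(t))$, which is isomorphic to $\uo(t)$. Here I use that $\uo(t)$ and $\uo(\canon(t))$ are isomorphic by definition of $\canon(t)$ (it is chosen among trees whose unordered version is isomorphic to $\uo(t)$), and that isomorphism of unordered rooted trees is an equivalence relation.

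\emph{($\Rightarrow$)} This is the direction requiring the induction. Suppose $\uo(s)$ and $\uo(t)$ are isomorphic; I want $\canon(s) = \canon(t)$. Write $s = \sigma(s_1,\dots,s_k)$ and $t = \tau(t_1,\dots,t_m)$. An isomorphism of $\uo(s)$ and $\uo(t)$ must map the root of $s$ to the root of $t$, so $\sigma = \tau$ (labels are preserved) and $k = m$ (since the trees are ranked, equal labels force equal numbers of children — and in any case the isomorphism is a bijection on children of the root), and it restricts to a bijection $\pi$ on $[k]$ such that $\uo(s_i)$ is isomorphic to $\uo(t_{\pi(i)})$ for each $i$. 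By the induction hypothesis, $\canon(s_i) = \canon(t_{\pi(i)})$ for all $i$, so the multiset $\{\!\{\canon(s_1),\dots,\canon(s_k)\}\!\}$ equals the multiset $\{\!\{\canon(t_1),\dots,\canon(t_k)\}\!\}$. Sorting these two equal multisets with respect to $\llex$ yields the same sorted list $c_{i_1} \lleqx \cdots \lleqx c_{i_k}$, and therefore $\canon(s) = \sigma(c_{i_1},\dots,c_{i_k}) = \canon(t)$ by the definition of $\canon$. The base case $k = 0$ is trivial: then $s = \sigma = t$ forced by the isomorphism, and $\canon(\sigma) = \sigma$.

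The main obstacle, such as it is, is purely bookkeeping: one must check carefully that an isomorphism of the unordered rooted trees genuinely decomposes into a root-to-root map plus a matching of the children subtrees (this uses that in $\uo(t)$ the root is the unique node of in-degree zero, so any isomorphism fixes roots, and the children of the root in $\uo(s)$ are exactly the out-neighbors of the root, which must map bijectively to the out-neighbors of the root in $\uo(t)$). Once that structural decomposition is in place, the argument is a routine multiset induction, which is why the statement says the lemma ``can be easily shown by an induction on the tree structure.''
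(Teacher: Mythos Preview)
Your proof is correct and follows exactly the approach the paper indicates: the paper does not actually spell out a proof of this lemma but simply remarks that it ``can be easily shown by an induction on the tree structure,'' and your inductive argument using the recursive description of $\canon$ is precisely that easy induction. There is nothing to add.
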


\subsection{Canonizing SLT-Represented Trees}

In the following, we denote a tree $A_1(A_2(\cdots A_n(t) \cdots))$, where
$A_1, A_2, \ldots, A_n$ are unary nonterminals with $A_1 A_2 \cdots A_n(t)$.

\begin{theorem}\label{thm:canon}
From a given SLT grammar $G$ one can construct 
in polynomial time an SLT grammar~$G'$  such that $\val(G')=\canon(\val(G))$.
\end{theorem}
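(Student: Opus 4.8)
The plan is to work with an SLT grammar $G$ in the normal form of Lemma~\ref{obs} and process its nonterminals in reverse topological order (smallest nonterminals first), maintaining the invariant that for every already-processed rank-$0$ nonterminal $A$ we have an SLT grammar $G_A$ with $\val(G_A)=\canon(\val_G(A))$, and for every rank-$1$ nonterminal a suitable canonized ``context'' object. For a type-(1) production $A\to\sigma(A_1,\dots,A_k)$ we need $\canon(\sigma(\val(A_1),\dots,\val(A_k)))=\sigma(c_{i_1},\dots,c_{i_k})$, where the $c_j=\val(G_{A_j})$ are sorted length-lexicographically; using Lemma~\ref{lm:order} we can compare any two of the (exponentially large, but compressed) canons $c_j$ in polynomial time, so we sort the list $c_1,\dots,c_k$ in polynomial time by straightforward comparison sorting, and then build $G_A$ by concatenating the sorted child grammars under a fresh $\sigma$-root. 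For type-(2) productions $A\to B(C)$ the difficulty is that $B$ is unary: applying $B$ to different arguments may reorder children along the spine, so $\canon(\val(B(C)))$ is not obtained by plugging $\canon(\val(C))$ into a fixed canonized context.

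To handle the unary nonterminals, I would introduce for each rank-$1$ nonterminal $B$ a notion of \emph{canonical context with a hole-parameter}, but where the ``sorting decisions'' that depend on the argument are deferred. Concretely, $\val_G(B)$ is a context $u_1u_2\cdots u_m(y)$ (a spine of symbols, each of type (3), i.e.\ of the form $\sigma(A_1,\dots,A_i,y,A_{i+1},\dots,A_k)$); when we substitute a tree $t$ for $y$, the canon of the result is obtained by, at each spine node from the bottom up, inserting the canon of the current subtree into the sorted list of the canons of its siblings. The key structural observation is that the position where the recursively-built canon gets inserted among the siblings at spine node $u_j$ is determined by comparing its $\dflr$-string (length-lexicographically) against the fixed sibling canons; so I would represent $\val(G_B)$ not as one context but as a compressed description that records, for the spine, the sibling multisets and lets the final instantiation step do the $m$-fold insertion. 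Since $m$ can be exponential, this instantiation must itself be done by a grammar transformation: I would build an SLP over an auxiliary alphabet that encodes the spine (exploiting type-(4) productions $B(y)\to B'(C'(y))$ to get the spine SLP compactly), then use Lemma~\ref{lm:region} and binary search (as in the proof of Lemma~\ref{lm:order}) to locate, in polynomial time, the relevant ``threshold'' positions along the spine where the insertion behaviour of the argument changes, and finally assemble $G_A$ from polynomially many SLP/SLT pieces spliced together at those thresholds.

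The hard part will be precisely this instantiation step: given a compressed spine of exponential length and a compressed argument canon, produce in polynomial time an SLT grammar for the canon of the substituted tree, handling the fact that the argument's canon, as it is carried up the spine, grows and its $\dflr$-string changes at every level, yet the comparison against each spine node's fixed siblings must be resolved for exponentially many levels at once. I expect to resolve this by showing the comparison outcome is \emph{monotone} (or piecewise constant with polynomially many breakpoints) along the spine: as one moves up, the carried canon only grows, so once it exceeds a given sibling bound it stays above it, which lets binary search over the spine SLP pin down each breakpoint with polynomially many equivalence/length-lexicographic tests via Lemmas~\ref{lm:region} and~\ref{lm:order}. Once the unary case is under control, type-(4) productions compose two already-processed contexts and type-(1)/(2) close the induction; summing the polynomially many polynomial-time grammar manipulations over the $|N|$ nonterminals gives the overall polynomial bound, and correctness follows from Lemma~\ref{lm:iso-canon} together with the recursive characterization of $\canon$ given before the theorem.
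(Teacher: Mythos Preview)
Your plan is essentially the paper's proof: sort type-(1) right-hand sides using Lemma~\ref{lm:order}; for a type-(2) production $Z\to B(A)$, view the full spine $B_1B_2\cdots B_N$ as an SLP (coming from the type-(4) rules), exploit that the tree carried up the spine is monotone under $\llex$, locate the breakpoints by binary search plus Lemma~\ref{lm:region}, and splice together reordered copies of the type-(3) rules on each block.

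Two points where the paper's framing is tighter than yours. First, the paper does \emph{not} maintain any canonized object for rank-$1$ nonterminals; it only canonizes rank-$0$ nonterminals and, when it reaches $Z\to B(A)$, simply reads the spine SLP directly off the existing type-(4) productions. Your ``deferred canonical context'' layer is harmless but unnecessary. Second, your justification for ``polynomially many breakpoints'' is incomplete: monotonicity of the carried tree alone does not bound the number of breakpoints; you also need that the siblings at every spine node come from a \emph{fixed polynomial-size set}. The paper makes this explicit by letting $S_1\llex\cdots\llex S_m$ enumerate all rank-$0$ nonterminals below $Z$, partitioning $T_\Sigma$ into the $m{+}1$ intervals $\mathcal I_0,\dots,\mathcal I_m$, and observing that the correct insertion position of $y$ in any type-(3) right-hand side depends only on which $\mathcal I_i$ the carried tree lies in. That is the missing sentence that turns your ``piecewise constant with polynomially many breakpoints'' from a hope into a fact.
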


\begin{proof}
Let $G=(N,\Sigma,S, P)$.
We assume that $G$ contains no distinct nonterminals $A_1,A_2 \in N^{(0)}$
such that $\val_G(A_1)=\val_G(A_2)$. This is justified because we can test
$\val_G(A_1)=\val_G(A_2)$ in polynomial time by Lemma~\ref{lm:order} (and 
replace $A_2$ by $A_1$ in $G$ in such a case).
We will add polynomially many new nonterminals to $G$ and change the productions
for nonterminals from $N^{(0)}$ such that 
for the resulting SLT grammar $G'$ we have
$\val_{G'}(Z) = \canon(\val_G(Z))$ for every $Z \in N^{(0)}$.

Consider a nonterminal $Z \in N^{(0)}$ and let $M$ be the 
set of all nonterminals in $G$ that can be reached from $Z$.
By induction, we can assume that $G$ already satisfies
$\val_{G}(A)=\canon(\val_G(A))$ for every $A\in M^{(0)} \setminus \{Z\}$.
We distinguish two cases.

\medskip
\noindent
Case (i). $Z$ is of type~(1) from Lemma~\ref{obs}, i.e., has a production $Z\to\sigma(A_1,\dots,A_k)$. 
Using Lemma~\ref{lm:order} we construct an ordering
$i_1,\dots,i_k$ of $[k]$ such that $\val_G(A_{i_1}) \lleqx \val_G(A_{i_2}) \lleqx \cdots\lleqx\val_G(A_{i_k})$.
We obtain $G'$ by replacing the 
production $Z\to\sigma(A_1,\dots,A_k)$ by
$Z \to\sigma(A_{i_1},\dots,A_{i_k})$ and get
$\val_{G'}(Z) = \canon(\val_G(Z))$.

\medskip
\noindent
Case (ii). $Z$ is of type~(2), i.e., has a production $Z\to B(A)$.
Let $\{S_1,\dots,S_m\}=M^{(0)} \setminus \{Z\}$ be an ordering such that 
\[
\val_G(S_1)\llex \val_G(S_2)\llex \cdots \llex \val_G(S_m).
\]
Note that $A$ is one of these $S_i$.
The sequence $S_1, S_2,\ldots,S_m$ partitions the set of all trees $t$ in $T_\Sigma$ 
into intervals $\mathcal{I}_0, \mathcal{I}_1,\ldots,\mathcal{I}_m$ with 
\begin{itemize}
\item $\mathcal{I}_0=\{t\in T_\Sigma\mid t\llex\val_H(S_1)\}$, 
\item $\mathcal{I}_i=\{t\in T_\Sigma\mid\val_H(S_i)\lleqx t\llex\val_H(S_{i+1})\}$ for $1\leq i< m$, and
\item $\mathcal{I}_m=\{t\in T_\Sigma\mid \val_H(S_m)\lleqx t\}$.
\end{itemize}
Consider the maximal $G(4)$-derivation starting from $B(A)$, i.e.,
$$
B(A) \Rightarrow_{G(4)}^* B_1 B_2 \cdots B_N(A),
$$
where $B_i$ is a typ-(3) nonterminal. Clearly, the number $N$ might be of exponential size, but the 
set $\{ B_1, \ldots, B_N\}$ can be easily constructed.
In order to construct an SLT for $\canon(\val_G(Z))$, it remains to reorder the arguments in right-hand
sides of the type-(3) nonterminals $B_i$. The problem is of course that different occurences of a type-(3)
nonterminal in the sequence $B_1 B_2 \cdots B_N$ have to be reordered in a different way. But we will
show that the sequence $B_1 B_2 \cdots B_N$ can be split into $m+1$ blocks
such that all occurrences of a type-(3) nonterminal in one of these blocks have to be reordered
in the same way.

Let $t_k = \val_G(B_{k} B_{k+1} \cdots B_N(A))$ for $k \in [N]$ and 
$t_{N+1} = \val_G(A)$.
Note that $t_1 = \val_G(Z) \lgex \val_G(S_m)$ and
that $t_{k+1} \llex t_k$ for all $k$.
For $i \in [m]$ let $k_i$ be the maximal position $k \leq N+1$ such that 
$t_k \lgeqx \val_G(S_i)$. Since $t_1 \lgeqx \val_G(S_m) \lgeqx \val_G(S_i)$
this position is well defined. Also note that if $A = S_i$, then
we have $k_i = k_{i-1} = \cdots = k_1 = N+1$. For every $0 \leq i \leq m$,
the interval $[k_{i+1}+1, k_i]$ is the set of all positions $k$ such that
$\val_G(t_k) \in \mathcal{I}_i$. Here we set $k_{m+1} = 0$ and $k_0 = N+1$.
Clearly, the interval $[k_{i+1}+1, k_i]$ might be empty.
The positions $k_0, \ldots, k_m$ can be computed in polynomial time, using binary
search combined with Lemma~\ref{lm:order}. 
To apply the latter, note that for a given position $k$ we can compute in polynomial
time an SLT grammar for the tree $t_k$ using Lemma~\ref{lm:region} for the SLP consisting
of all type-(4) productions that are used to derive $B_1 B_2 \cdots B_N$.

We now factorize the string $B_1 B_2 \cdots B_N$ as 
$B_1 B_2 \cdots B_N = u_m u_{m-1} \cdots u_0$, where 
$u_m = B_1 \cdots B_{k_m-1}$ and
$u_i = B_{k_{i+1}} \cdots B_{k_i-1}$ for $0 \leq i \leq m-1$.
By Lemma~\ref{lm:region} we can compute in polynomial time an SLP $G_i$ 
for the string $u_i$. For the further consideration, we view $G_i$
as a $1$-SLT grammar consisting only of type-(4) productions. 
Note that $\val(G_i)$ is a linear tree, where every node is labelled with a type-(3)
nonterminal. We now add reordered versions of  type-(3) productions to 
$G_i$. Consider a type-(3) production $(C(y) \to 
\sigma(A_1,\dots,A_j,y,A_{j+1},\dots,A_k)) \in P$ where  $C
\in \{ B_1, \ldots, B_N\}$. Then we add to $G_i$ the type-(3) production
\[
C(y) \to \sigma(A_{j_1},\dots,A_{j_\nu},y,A_{j_{\nu+1}},\dots,A_{j_k}),
\]
where $\{j_1,\dots,j_k\}=[k]$ and $0\leq\nu\leq k$ are chosen such that
\begin{enumerate}
\item[(1)] $\val_G(A_{j_1})\lleqx\val_G(A_{j_2})\lleqx\cdots\lleqx\val_G(A_{j_k})$ and
\item[(2)] $\val_G(A_{j_\nu})\lleqx \val_G(S_i) \llex\val_G(A_{j_{\nu+1}})$. 
\end{enumerate}
Note that if $\nu=k$ then condition~(2)
states that $\val_G(A_{j_k})\lleqx \val_G(S_i)$, and
if $\nu=0$ then it states that $\val_G(S_i) \llex\val_G(A_{j_1})$. Also
note that condition (2) ensures that for every tree $t \in \mathcal{I}_i$
we have $\val_G(A_{j_\nu})\lleqx t \llex\val_G(A_{j_{\nu+1}})$. Hence,
$\val_G(\sigma(A_{j_1},\dots,A_{j_\nu},t,A_{j_{\nu+1}},\dots,A_{j_k}))$
is a canon.  The crucial observation now is that the above factorization
$u_m u_{m-1} \cdots u_0$ of $B_1 B_2 \cdots B_N$ was defined in such a way
that for every occurrence of a type-(3) nonterminal $C(y)$ in $u_i$, the parameter
$y$ will be substituted by a tree from $\mathcal{I}_i$ during the derivation from $Z$ 
to $\val_G(Z)$. Hence, we reorder the arguments in the right-hand sides of nonterminal
occurrences in $u_i$ in the correct way to obtain a canon.

We now rename the nonterminals in the SLT grammars $G_i$ (which are now of type (3) and type (4))
so that the nonterminal sets of $G, G_0, \ldots, G_m$ are pairwise disjoint.
Let $X_i(y)$ be the start nonterminal of $G_i$ after the renaming. Then we 
add to the current SLT grammar $G$ the union of all the $G_i$, and replace the production
$Z \to B(A)$ by $Z \to X_m X_{m-1} \cdots X_0(A)$. The construction implies that
$\val_{G'}(Z) = \canon(\val_G(Z))$ for the resulting grammar $G'$.

It remains to argue that the above construction can be carried out in polynomial time.
All steps only need polynomial time in the size of the current SLT grammar. Hence, it 
suffices to show that the size of the SLT grammar is polynomially bounded. 
The algorithm is divided into $|N^{(0)}|$ many phases, where in each phase it enforces $\val_{G'}(Z) = \canon(\val_G(Z))$
for a single nonterminal $Z$. 
Consider a single phase, where $\val_{G'}(Z) = \canon(\val_G(Z))$ is enforced for a nonterminal
$Z$. In this phase, we (i) change the production for $Z$ and (ii) add new type-(3) and type-(4) productions
to $G$ (the union of the $G_i$ above). But the number of these new productions is polynomially bounded in the size
of the initial SLT grammar (the one before the first phase), because the nonterminals introduced
in earlier phases are not relevant for the current phase.
This implies that the additive size increase in each 
phase is bounded polynomially in the size of the initial grammar.
\qed
\end{proof}

\begin{corollary}\label{theo:iso-uo}
The problem of deciding whether 
$\val_{\mathsf{uo}}(G_1)$ and $\val_{\mathsf{uo}}(G_2)$
are isomorphic for given  SLT grammars $G_1$ and $G_2$ is \Ptime-complete.
\end{corollary}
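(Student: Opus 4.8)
The plan is to read off membership in \Ptime directly from Theorem~\ref{thm:canon} and to establish \Ptime-hardness by a reduction from a \Ptime-complete problem. For the upper bound: given SLT grammars $G_1,G_2$, I would first replace them by $\ranked(G_1)$ and $\ranked(G_2)$, which is possible in polynomial time and, as recorded before Section~\ref{sect:lex}, does not change the answer, i.e.\ $\uo(\val(G_1))$ and $\uo(\val(G_2))$ are isomorphic iff $\uo(\val(\ranked(G_1)))$ and $\uo(\val(\ranked(G_2)))$ are (this also handles possibly non-ranked input trees). Applying Theorem~\ref{thm:canon} to $\ranked(G_1)$ and $\ranked(G_2)$ yields, in polynomial time, SLT grammars $G_1',G_2'$ with $\val(G_i')=\canon(\val(\ranked(G_i)))$. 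By Lemma~\ref{lm:iso-canon} the two unordered trees are isomorphic iff their canons coincide, i.e.\ iff $\val(G_1')=\val(G_2')$, which is decidable in polynomial time by Lemma~\ref{lm:order}(2). Hence the problem lies in \Ptime; this direction is routine, the substance being Theorem~\ref{thm:canon}.

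For \Ptime-hardness, the plan is to reduce from a \Ptime-complete problem. A conceptually clean candidate is bisimulation equivalence of node-labelled dags, which is \Ptime-complete \cite{BaGaSa92}: from a dag $D$ one computes the bisimulation quotient $\bar D$ (the minimal dag bisimilar to $D$, unique up to isomorphism) and views $\bar D$ as an SLT grammar producing the unfolding tree $t_{\bar D}$; then, by uniqueness of the quotient together with the fact that isomorphism refines bisimulation, two dags $D_1,D_2$ have bisimilar unfoldings iff $\uo(t_{\bar D_1})$ and $\uo(t_{\bar D_2})$ are isomorphic. Since a dag is directly an SLT grammar whose value is its unfolding, this yields a valid reduction.

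The difficulty I expect is twofold. First, computing the bisimulation quotient is itself \Ptime-complete, so the reduction just sketched is only polynomial-time; to obtain a reduction computable in logarithmic space --- matching the standard notion of \Ptime-hardness --- I would instead look for a \Ptime-complete source problem that is already ``order-theoretic'' in flavour (e.g.\ a lexicographically-first search problem, or the generability/emptiness problem for context-free grammars) and encode it so that the inherently sequential canonisation performed on the produced grammars carries out the computation, exploiting that $\canon$ is defined through depth-first traversals and iterated $\llex$-comparisons. Second, and relatedly, such an encoding must cope with the fact that unordered isomorphism --- unlike bisimulation, which can express existential quantification over children --- forces exact multiset equality of children, so the naive ``simulate a Boolean circuit'' strategy (which would have to express disjunction compositionally through tree constructors) does not go through; circumventing this is, I expect, the real crux of the hardness argument.
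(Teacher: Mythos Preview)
Your upper bound is correct and matches the paper exactly: pass to $\ranked(G_i)$, apply Theorem~\ref{thm:canon} to obtain SLT grammars for the canons, invoke Lemma~\ref{lm:iso-canon}, and test equality via Lemma~\ref{lm:order}(2).

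The lower bound, however, is not completed. Your first attempt --- reduce from bisimulation equivalence of dags by first computing the bisimulation quotient --- you yourself correctly diagnose as useless for \Ptime-hardness: the reduction must be logspace (or at least in some class below \Ptime), and computing the quotient is already \Ptime-hard. Your fallback plan (find a different \Ptime-complete source problem and encode it so that the canonisation machinery simulates the computation) is only a sketch of a strategy; you do not carry it out, and you explicitly flag the obstruction that unordered isomorphism enforces multiset equality of children and hence does not compositionally express disjunction. So as it stands there is a genuine gap: no hardness proof is given.

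The paper's resolution is far simpler than the route you are contemplating: it does not construct a new reduction at all, but invokes the result of \cite{LohreyM13} that unordered isomorphism is already \Ptime-hard for \emph{dags}. Since a dag is precisely an SLT grammar in which every nonterminal has rank~$0$, this special case immediately gives \Ptime-hardness for SLT grammars. In other words, the ``real crux'' you anticipate has been dealt with elsewhere in the literature, and the corollary is obtained by citation rather than by a fresh encoding argument.
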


\begin{proof}
Membership in \Ptime{} follows immediately from
Lemma~\ref{lm:order}, Lemma~\ref{lm:iso-canon}, and Theorem~\ref{thm:canon}.
Moreover, \Ptime-hardness already holds for dags, i.e., SLT grammars where all nonterminals
have rank $0$, as shown in \cite{LohreyM13}.
\qed
\end{proof}

\section{Isomorphism of Unrooted Unordered SLT-Represented Trees}

An unrooted unordered tree $t$ over $\Sigma$ can be seen as a 
node-labeled (undirected) graph $t=(V,E,\lambda)$, where
$E\subseteq V\times V$ is symmetric and $\lambda:V\to\Sigma$.
For a node $v$ of $t$ we define the 
eccentricity $\ecc_t(v)=\max_{u \in V}\delta_t(u,v)$ and the diameter
$\diameter(t) = \max_{v\in V}\ecc_t(v)$, 
where $\delta_t(u,v)$ denotes the distance from $u$ to $v$ (i.e.,
the number of edges on the path from $u$ to $v$ in $t$).

Let $t \in T_\Sigma$ be a rooted ordered tree over $\Sigma$ and let 
$t'=\uo(t)=(V,E,\lambda)$ be the rooted unordered tree corresponding to $t$.
The tree $\ur(t')=(V,E\cup E^{-1},\lambda)$ over $\Sigma$ is 
the unrooted version of $t'$.
An unrooted unordered tree $t$ can be represented by an SLT grammar $G$
by forgetting the order and root information present in $G$.
Let $\val_{\ur,\uo}(G) = \ur(\uo(\val(G)))$.

In this section it is proved that isomorphism for unrooted unordered trees $t_1,t_2$
represented by SLT grammars $G_1,G_2$, respectively, can be solved in polynomial time
with respect to $|G_1|+|G_2|$. We reduce the problem to the 
(rooted) unordered case that was solved in Corollary~\ref{theo:iso-uo}.

Let $t=(V,E,\lambda)$ be an unordered unrooted tree.
A node $u$ of $t$ is called \emph{center node of $t$} 
if for all leaves $v$ of $t$:
\[
\delta_t(u,v)\leq(\diameter(s)+1)/2.
\]
Let $\Center(t)$ be the set of all center nodes of $t$.
One can compute the center nodes by deleting all leaves of the tree and 
iterating this step, until the current tree consists of at most two nodes. These
are the center nodes of $t$. In particular, $t$ has either one or two center nodes.
Another characterization of center nodes that is important for our algorithm is via
longest paths. Let $p = (v_0, v_1, \ldots, v_n)$ be a longest simple path in $t$, i.e., 
$n = \diameter(t)$. Then the middle points $v_{\lfloor n \rfloor}$ and $v_{\lceil n \rceil}$
(which are identical if $n$ is even) are the center nodes of $t$. These nodes  are independent
of the concrete longest path $p$.

Note that there are two center nodes if and only if $\diameter(t)$ is odd. Since our constructions
are simpler if a unique center node exists, we first make sure that $\diameter(t)$ is even. 
Let $\#$ be a new symbol not in $\Sigma$. 
For an unrooted unordered tree $t$ we denote by $\even(t)$ the tree where every pair of edge $(u,v), (v,u)$
is replaced by the edges $(u,v'),(v',v), (v,v'), (v',u)$, where $v'$ is a new node labelled $\#$. Then for an
SLT grammar $G=(N,\Sigma,P,S)$
we let $\even(G)=(N,\Sigma\cup\{\#\},P',S)$ be the
SLT grammar where $P'$ is obtained from $P$ by replacing every subtree $\sigma(t_1,\ldots,t_k)$
with $\sigma\in\Sigma$, $k\geq 1$, in a right-hand side by the subtree $\sigma(\#(t_1),\ldots,\#(t_k))$.
Observe that
\begin{itemize}
    \item $\val_{\ur,\uo}(\even(G)) = \even(\val_{\ur,\uo}(G))$,
    \item $\diameter(\even(t)) = 2\cdot\diameter(t)$ is even, i.e., $\even(t)$ has only one center node,
        and
    \item trees $t$ and $s$ are isomorphic if and only if $\even(t)$ and $\even(s)$ are isomorphic.
\end{itemize}
Since $\even(G)$ can be constructed in polynomial time, we assume in the following that every SLT grammar
produces a tree of even diameter and therefore has only one center node.
For a tree $t$ of even diameter, we denote with $\Center(t)$ its unique center node.

Let $u\in V$. We construct a rooted version $\rooted(t,u)$ of $t$,
with root node $u$. 
We set $\rooted(t,u) = (V,E',\lambda)$, where
$E' = \{ (v,v') \in E \mid  \delta_t(u,v) < \delta_t(u,v') \}$.

Two unrooted unordered trees $t_1,t_2$ of even diameter are isomorphic if and only if
$\rooted(t_1,\Center(t_1))$ is isomorphic to $\rooted(t_2,\Center(t_2))$.
Thus, we can solve in polynomial time the isomorphism problem for unrooted unordered trees
represented by SLT grammars $G,G'$ by
\begin{enumerate}
\item[(1)] determining in polynomial time compressed representations
$\tilde{u}_1$ and $\tilde{u}_2$  of $u_1=\Center(\val_{\mathsf{ur,uo}}(G))$ and $u_2=\Center(\val_{\mathsf{ur,uo}}(G'))$, 
respectively  (Section~\ref{sec:find-center}),
\item[(2)] constructing in polynomial time SLT grammars $G_1,G_2$ 
    such that 
    $\val_{\mathsf{uo}}(G_1)=\rooted(\val_{\mathsf{ur,uo}}(G),u_1)$ and
    $\val_{\mathsf{uo}}(G_2)=\rooted(\val_{\mathsf{ur,uo}}(G'),u_2)$  (Section~\ref{sec:reroot}), and 
\item[(3)] testing in polynomial time if
$\val_{\mathsf{uo}}(G_1)$ is isomorphic to
$\val_{\mathsf{uo}}(G_2)$ (Corollary~\ref{theo:iso-uo}).
\end{enumerate}

\subsection{Finding Center Nodes} \label{sec:find-center}

Let $G=(N,\Sigma,S,P)$ be an SLT grammar.
A \emph{$G$-compressed path} $p$ is a string of pairs
$p=(A_1,u_1)\cdots(A_n,u_n)$ such that
for all $i\in[n]$, $A_i\in N$, $A_1 = S$,
 $u_i\in D(t_i)$ is a Dewey address in $t_i$ where $(A_i \to t_i) \in P$,
$t_i[u_i] = A_{i+1}$ for $i < n$, 
and $t_i[u_n]\in\Sigma$. If we omit  the condition $t_i[u_n]\in\Sigma$,
then $p$ is a partial $G$-compressed path.
Note that by definition, $n\leq |N|$.
A partial $G$-compressed path uniquely represents one particular
node in the derivation tree of $G$, and a $G$-compressed path represents
a leaf of the derivation tree and hence a node of $\val(G)$. We denote this node by
$\val_G(p)$. The concatenation $u_1, u_2,\ldots, u_n$ of the Dewey addresses is denoted by $u(p)$.

For a context $t(y) \in \mathcal{C}_\Sigma$  we define
$\ecc(t) = \ecc_t(y)$ (recall that in a context there is a unique occurence of the parameter $y$) and
$\rootToY(t) =\delta_t(\varepsilon, y)$ (the distance from the root to the parameter $y$). For a tree
$s \in T_\Sigma$ we denote with $h(s)$ its height.
We extend these notions to contexts $t \in \mathcal{C}_{\Sigma\cup N}$ and
trees $s \in T_{\Sigma\cup N}$ by
$\ecc(t) =\ecc(\val_G(t))$,  $\rootToY(t) =\rootToY(\val_G(t))$, and
$h(s) = h(\val_G(s))$. 

Eccentricity, distance from root to $y$, and height can be computed in
polynomial time for all nonterminals bottom-up.
To do so, observe that for two contexts $t(y),t'(y) \in  \mathcal{C}_{\Sigma\cup N}$ 
and a tree $s \in T_{\Sigma\cup N}$ 
we have
\begin{itemize}
    \item $\rootToY(t[t']) = \rootToY(t) + \rootToY(t')$,
    \item $\ecc(t[t']) = \max\{\ecc(t'), \ecc(t)+\rootToY(t')\}$, and
    \item $h(t[s]) = \max\{h(s),\rootToY(t) + h(s)\}$.
\end{itemize}
Similarly, for a context $t(y) = \sigma(s_1, \ldots s_i, y, s_{i+1}, \ldots, s_k)$ 
and a tree $s = \sigma(s_1, \ldots, s_k)$ 
we have:
\begin{itemize}
    \item $\rootToY(t) = 1$,
    \item $\ecc(t) = 2+\max\{ h(s_i) \mid 1 \leq i \leq k\}$, and
    \item $h(s) = 1+\max\{ h(s_i) \mid 1 \leq i \leq k\}$.
 \end{itemize}
 Finally, note that for the tree $t[s]$ ($t(y) \in \mathcal{C}_{\Sigma}, s\in T_{\Sigma}$)
 we have
 \begin{equation} \label{eq:diameter}
 \diameter(t[s]) = \max \{ \diameter(t), \diameter(s), \ecc(t) + h(t) \}.
\end{equation}
Our search for the center node of an SLT-compressed tree is based on the following lemma.
For a context $t(y) \in \mathcal{C}_{\Sigma}$, where $u$ is the Dewey address of the parameter $y$,
and a tree $s \in T_{\Sigma}$ we say that a node $v$ of $t[s]$ belongs to $t$ if the Dewey
address of $v$ is in $D(t) \setminus \{ u \}$.  Otherwise, we say that $v$ belongs to 
$s$, which means that $u$ is a prefix of the Dewey address of $v$.

\begin{lemma} \label{lemma:search-center}
Let $t(y) \in \mathcal{C}_{\Sigma}$ be a context and $s\in T_{\Sigma}$ a tree
such that $\diameter(t[s])$ is even. Let $c = \Center(t[s])$.
Then we have the following:
\begin{itemize}
\item If $\ecc(t) \leq h(s)$ then $c$ belongs to $s$.
\item If $\ecc(t) > h(s)$ then $c$ belongs to $t$.
\end{itemize}
\end{lemma}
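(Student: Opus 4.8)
The plan is to analyze where the center node $c = \Center(t[s])$ sits by comparing the longest path passing through the subtree rooted at the parameter position against the longest path staying inside $t$. Write $u$ for the Dewey address of $y$ in $t$, so that $s$ becomes the subtree of $t[s]$ rooted at $u$. I would first recall the characterization of the center node via longest simple paths given in the excerpt: $c$ is the midpoint of any longest simple path in $t[s]$, and since $\diameter(t[s])$ is even, this midpoint is unique. A longest simple path through the node $u$ has length $\rootToY(t) + h(t[s])$ realized on the ``$t$-side'' going up from $u$ to the root and into the deepest branch of $t$ away from $u$, wait, more carefully: the quantity $\ecc(t) = \ecc_{\val_G(t)}(y)$ is exactly the length of the longest path inside $t$ (viewed as rooted at its own root, with the hole at $u$) emanating from $u$ but not entering $s$, while $h(s)$ is the length of the longest path inside $s$ emanating from $u$ (i.e., from the root of $s$).

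The key combinatorial step is to determine, for a node $v$, its eccentricity $\ecc_{t[s]}(v)$ split according to whether $v$ belongs to $t$ or to $s$. If $v$ belongs to $s$ at depth $d$ below $u$ (so $\delta_{t[s]}(u,v) = d$), then the farthest node from $v$ is either inside $s$ (distance at most $h(s)$, and more precisely bounded using the structure of $s$) or outside $s$, reached by going to $u$ and then the longest path in $t$ from $u$, giving $d + \ecc(t)$. Symmetrically, if $v$ belongs to $t$ at distance $d' = \delta_{t[s]}(u,v)$ from $u$, the farthest node is either inside $t$ away from $u$, or obtained by going to $u$ and descending into $s$, giving $d' + h(s)$. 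I would then argue: when $\ecc(t) \leq h(s)$, the diametral path has its $t$-side contribution dominated by the $s$-side, so the midpoint is pushed into (or at least toward) $s$; when $\ecc(t) > h(s)$, it is pushed into $t$. Concretely, I would use that $\diameter(t[s]) \geq \ecc(t) + \rootToY(t)$ is witnessed by a path that passes through $u$, so $u$ lies on a diametral path; the center is then obtained by walking $\diameter(t[s])/2$ steps from the ``far end'' of this diametral path and checking on which side of $u$ this lands.

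The cleanest way to finish is probably to fix a longest simple path $p = (v_0, \dots, v_n)$ with $n = \diameter(t[s])$ that passes through $u$ — such a path exists because $u$ is a cut vertex separating $s$ from the rest, and any diametral path either lies entirely in $t$, entirely in $s$, or crosses $u$; in the first two cases one can reroute through $u$ without shortening, using $\diameter(t[s]) = \max\{\diameter(t), \diameter(s), \ecc(t)+h(t)\}$ wait, that should be $\ecc(t) + \rootToY$, let me just say using the diameter formula \eqref{eq:diameter} and the fact that the third term $\ecc(t) + h(t)$... actually re-reading, I would simply invoke that a diametral path may be taken through $u$ and that on this path, the portion on the $s$-side from $u$ has length $h(s)$ (or can be taken to be $h(s)$) and the portion on the $t$-side from $u$ has length $\ecc(t)$. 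Then the midpoint $v_{n/2}$ is at distance $n/2 = (\ecc(t) + h(s) + (\text{possibly a correction}))/2$ from $v_0$; comparing $n/2$ with the length of the $s$-side versus the $t$-side of $u$ tells us which side contains $v_{n/2}$. Specifically, if $h(s) \geq \ecc(t)$ then $n/2 \geq$ distance from the $t$-end of the path to $u$, so $v_{n/2}$ is at $u$ or on the $s$-side, hence belongs to $s$ (recall $u$ itself belongs to $s$ by the convention in the excerpt, since $u$ is a prefix of its own address); if $h(s) < \ecc(t)$ then $v_{n/2}$ is strictly on the $t$-side, hence belongs to $t$.

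The main obstacle I anticipate is the bookkeeping around boundary cases and the precise decomposition of a diametral path: one must be careful that the diametral path can indeed be chosen to realize exactly $\ecc(t)$ on the $t$-side and the full available height on the $s$-side simultaneously, rather than one path realizing the $t$-eccentricity and a different path realizing $h(s)$. Handling this requires observing that the path achieving $\ecc(t)$ from $u$ inside $t$ and the path achieving $h(s)$ from the root of $s$ share only the vertex $u$, so their concatenation is a simple path of length $\ecc(t) + h(s)$, and then checking via \eqref{eq:diameter} that this length is at least $\diameter(t)$ and $\diameter(s)$ in the relevant regime (or, if not, that the center still lands on the claimed side because the competing diameter term is ``local'' to one side). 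A secondary subtlety is the off-by-one behavior of ``$\leq$'' versus ``$>$'' at $\ecc(t) = h(s)$, which is exactly where the even-diameter hypothesis is used to pin down the unique center; I would make sure the tie case $\ecc(t) = h(s)$ puts the center at $u$, which by convention belongs to $s$, matching the first bullet.
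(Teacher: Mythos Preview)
Your approach is the paper's approach: both locate the center as the midpoint of a diametral path and use the decomposition $\diameter(t[s]) = \max\{\diameter(t),\diameter(s),\ecc(t)+h(s)\}$ (equation~\eqref{eq:diameter}; the printed $h(t)$ there is a typo for $h(s)$, as you suspected). Two points in your plan are off and need repair.

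First, you place the role of the even-diameter hypothesis at the tie $\ecc(t)=h(s)$, but there nothing special is needed: then $\diameter(t)\le 2\ecc(t)=2h(s)$ and $\diameter(s)\le 2h(s)$, so $\diameter(t[s])=2h(s)$ and the midpoint of a diametral path through $u$ is exactly $u$, which belongs to $s$ by the stated convention. The hypothesis is actually needed to exclude the case $\ecc(t)=h(s)+1$. There one checks that $\diameter(t[s])=\ecc(t)+h(s)=2h(s)+1$ is odd, the key step being that if a diametral path of $t$ does not end in the leaf $y$ then $y$ is not on that path at all, whence $\ecc(t)\ge 1+\diameter(t)/2$, i.e.\ $\diameter(t)\le 2(\ecc(t)-1)$. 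This parity obstruction is exactly what ensures that whenever $\ecc(t)>h(s)$ the midpoint lands \emph{strictly} on the $t$-side rather than at $u$.

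Second, your ``local to one side'' remark must be made into an actual argument, because a diametral path through $u$ of length $\ecc(t)+h(s)$ need not be diametral in $t[s]$. The missing observations are short: if $\diameter(s)$ strictly exceeds $\ecc(t)+h(s)$ then $h(s)>\ecc(t)$ (so we are in the first bullet) and a diametral path lies entirely in $s$, hence so does its midpoint; if $\diameter(t)$ strictly exceeds $\ecc(t)+h(s)$ then $\ecc(t)>h(s)$ (so we are in the second bullet) and, by the leaf argument above, a diametral path of $t$ avoids $u$, so its midpoint lies in $D(t)\setminus\{u\}$. With these two checks added your outline becomes a complete proof, matching the paper's case split into $\ecc(t)\le h(s)$, $\ecc(t)=h(s)+1$ (impossible), and $\ecc(t)>h(s)+1$.
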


\begin{proof}
Let us first assume that $\ecc(t) \leq h(s)$, Then we have $\diameter(t) \leq 2 \cdot \ecc(t) \leq \ecc(t)+h(s)$, i.e., 
$\diameter(t[s]) = \max\{\diameter(s),\ecc(t)+h(s)\}$ by \eqref{eq:diameter}. Together with $\ecc(t) \leq h(s)$ this 
implies that the middle point of a longest path in $s[t]$ (which is $c$) belongs to the tree $s$.

Next, assume that $\ecc(t) = h(s)+1$. Then we have $\diameter(s) \leq 2 \cdot h(s) < \ecc(t)+h(s)$, i.e.,
$\diameter(t[s]) = \max\{\diameter(t),\ecc(t)+h(s)\}$.   Moreover, we claim that $\ecc(t)+h(s) \geq \diameter(t)$.
In case $\diameter(t) = \ecc(t)$, this is clear. Otherwise, $\diameter(t) > \ecc(t)$ and a longest path in $t$
does not end in the parameter node $y$. It follows that $\diameter(t) \leq 2 \cdot (\ecc(t) - 1) < \ecc(t) + h(s)$.
Thus, we have $\diameter(t[s]) = \ecc(t)+h(s) = 2 \cdot h(s)+1$, which is odd, a contradiction. Hence, this 
case cannot occur.

Finally, assume that $\ecc(t) > h(s)+1$. Again, we get $\diameter(t[s]) = \max\{\diameter(t),\ecc(t)+h(s)\}$.
Moreover, since $\ecc(t) > h(s)+1$ the  center nodes $c$ must belong to $t$.
\qed
\end{proof}

\begin{lemma}   \label{lem:center_points}
For a given SLT grammar $G$ such that $\val_{\mathsf{ur,uo}}(G)$ has even diameter,
one can construct a $G$-compressed path for $\centr(\val_{\mathsf{ur,uo}}(G))$.    
\end{lemma}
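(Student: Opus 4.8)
The plan is to build the $G$-compressed path for $\Center(\val_{\ur,\uo}(G))$ incrementally by a top-down descent through the derivation structure of $G$, in polynomial time. Throughout the descent I would keep a partial $G$-compressed path $p$ that reaches a rank-$0$ nonterminal $A$, together with the decomposition $\val(G) = C[\val_G(A)]$ it induces; here $C$ is a genuine context because $G$ is linear, so fixing the one occurrence of $A$ selected by $p$ leaves exactly one hole. The invariant is that the center belongs to $\val_G(A)$ in the sense of Lemma~\ref{lemma:search-center}. Since $\diameter(\val_{\ur,\uo}(G))$ is even and $C[\val_G(A)] = \val(G)$ always, Lemma~\ref{lemma:search-center} applies at each step and, as every node lies on exactly one side, gives: the center belongs to $\val_G(A)$ iff $\ecc(C) \leq h(\val_G(A))$. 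I would never materialize $C$, only carry along $\ecc(C)$ and $\rootToY(C)$ (numbers of polynomial bit-length), updating them via the formulas recalled before Lemma~\ref{lemma:search-center}; the heights $h(\val_G(X))$ are precomputed bottom-up for all $X$. Initially $p$ is empty, $A = S$, $C = y$, $\ecc(C) = \rootToY(C) = 0$.

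One descent step, first for a type-(1) production $A \to \sigma(A_1,\dots,A_k)$ (all $A_j$ of rank $0$): for each $j$, the outer context obtained by decomposing at $\val_G(A_j)$ is $C$ with the layer $\sigma(\val_G(A_1),\dots,y,\dots,\val_G(A_k))$ (hole at child $j$) plugged in, whose eccentricity and root-to-parameter distance follow from $\ecc(C)$, $\rootToY(C)$ and the precomputed heights. By Lemma~\ref{lemma:search-center} the center belongs to $\val_G(A_j)$ iff that eccentricity is $\leq h(\val_G(A_j))$, and this holds for at most one $j$ (the $\val_G(A_j)$ being pairwise disjoint). If it holds for some $j$, I append $(A,j)$ to $p$, move to $A := A_j$ with the updated context, and iterate; if for none, then the center is a node of $\val_G(A)$ lying in no $\val_G(A_j)$, hence is the root $\sigma$, so I output $p\cdot(A,\varepsilon)$ and stop (valid since $\sigma\in\Sigma$; the case $k = 0$ is subsumed).

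The substantial case is a type-(2) production $A \to B(D)$. Here I would first perform the maximal $G(4)$-derivation $B(D)\Rightarrow_{G(4)}^* B_1B_2\cdots B_N(D)$, each $B_i$ of type-(3), where $N$ may be exponential but $\{B_1,\dots,B_N\}$ is easily computed; put $s_k=\val_G(B_k\cdots B_N(D))$ and $C_k=\val_G(B_1\cdots B_{k-1}(y))$ for $k\in[N+1]$, so that $\val(G)=C[C_k[s_k]]$. Because each of the $N$ layers adds exactly $1$ to $\rootToY$, the quantity $\ecc(C[C_k])=\max\{\ecc(C_k),\ecc(C)+k-1\}$ is strictly increasing in $k$ while $h(s_k)$ is strictly decreasing in $k$; hence, by Lemma~\ref{lemma:search-center}, $\{k : \text{the center belongs to } s_k\}$ is a nonempty initial interval $[1,k^*]$ (nonempty because $s_1=\val_G(A)$), and $k^*$ is located by binary search — for a query $k$, Lemma~\ref{lm:region} applied to the SLP of type-(4) productions deriving $B_1\cdots B_N$ yields SLT grammars for $C_k$ and $s_k$, from which $\ecc(C_k)$ and $h(s_k)$ are computed bottom-up in polynomial time. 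If $k^*=N+1$, the center belongs to $\val_G(D)$: append $(A,1)$ to $p$, set $A:=D$, update the context to $C[C_{N+1}]$, and iterate. Otherwise $k^*\leq N$ and the center lies in the single layer contributed by $B_{k^*}$; since $B_{k^*}(y)\to\sigma(A_1,\dots,y,\dots)$ is of type-(3), this layer is just a $\sigma$-node with rank-$0$ side subtrees $\val_G(A_i)$. I would extend $p$ to that occurrence of $B_{k^*}$ by walking down the binary derivation tree of the $G(4)$-spine: having precomputed bottom-up, for every nonterminal, the number of type-(3) frontier descendants it has, at a node $X(y)\to X'(X''(y))$ I compare the target index against that count for $X'$ to descend into $X'$ (Dewey address $\varepsilon$) or into $X''$ (Dewey address $1$); by acyclicity this walk has length $\leq|N|$ and visits pairwise distinct nonterminals. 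Finally I treat the layer $B_{k^*}$ exactly as a type-(1) node — its side subtrees, together with the already-known $s_{k^*+1}$ sitting in the old hole — ending the descent in $(B_{k^*},\varepsilon)$ if the center is $\sigma$, or descending into the unique side subtree that passes Lemma~\ref{lemma:search-center}.

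For termination and complexity: each iteration strictly lengthens $p$, and by acyclicity of $G$ every nonterminal ever appended to $p$ is fresh, so there are at most $|N|$ iterations and $|p|\leq|N|$; each iteration runs in polynomial time, the only delicate ingredients being the $G(4)$-expansion and the binary search, both handled by Lemma~\ref{lm:region} together with bottom-up evaluation of $\ecc$, $h$ and $\rootToY$. I expect the main obstacle to be the type-(2) case: realizing that the center occupies a single layer of the possibly exponentially long $G(4)$-spine and pinning it down through the monotonicity of $\ecc(C[C_k])$ against $h(s_k)$; converting the position of that layer into a polynomial-length $G$-compressed path by navigating the spine's binary derivation tree; and keeping $\ecc$ and $\rootToY$ of the accumulating outer context correct so that Lemma~\ref{lemma:search-center} is always invoked with the right context.
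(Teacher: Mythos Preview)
Your proposal is correct, but it takes a different and more elaborate route than the paper. The paper's algorithm recurses uniformly on nonterminals of \emph{either} rank: it maintains explicit sentential forms $t_l$ and $t_r$ (trees over $N\cup\Sigma$) with the invariant $\val(G)=\val(t_l[A[t_r]])$, and at a type-(2) or type-(4) production $A\to B(C)$ (resp.\ $A(y)\to B(C(y))$) it simply compares $\ecc(t_l[B(y)])$ with $h(C[t_r])$ and descends into $B$ or $C$ accordingly, appending one pair to $p$. Since each step adds at most one right-hand side worth of symbols to $t_l$ or $t_r$ and each nonterminal is visited at most once, $|t_l|+|t_r|$ stays bounded by $|G|$, and the whole recursion is polynomial without any binary search or appeal to Lemma~\ref{lm:region}.

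Your choice to recurse only on rank-$0$ nonterminals forces you to deal with the possibly exponential $G(4)$-spine in one shot; you then recover polynomial time via the monotonicity of $\ecc(C[C_k])$ versus $h(s_k)$, binary search, and the SLP region lemma. That is sound (your monotonicity claims hold because each type-(3) layer has $\rootToY=1$, so $h(s_k)$ strictly decreases and $\ecc(C[C_k])$ strictly increases), and your navigation of the spine's derivation tree to extend $p$ to $B_{k^*}$ is correct. What the paper's approach buys is simplicity: no binary search, no region restriction, and no need to keep track of $h(s_{k^*+1})$ when splicing $s_{k^*+1}$ back into the context after descending into a side subtree. What your approach buys is that the context is never materialized, only summarized by two numbers; this is a cleaner invariant but requires the extra machinery to make the spine case work.
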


\begin{proof}
Consider the recursive Algorithm~\ref{alg:center_point}.
It is started with $t_l=y$, $t_r=p=\varepsilon$ and $A=S$ and 
computes the node $\centr(\val_{\mathsf{ur,uo}}(G))$. 
The following invariants are preserved by the algorithm:
If $\centr(t_l, A, t_r, p)$ is called, then we have:
\begin{itemize}
\item If $A$ has rank $0$ then $t_r = \varepsilon$
\item $\val(G) = \val(t_l[A[t_r]])$ (here we set $t[\varepsilon]=t$).
\item The tree $t_l[A[t_r]]$ can be derived from the start variable $S$.
\item $p$ is the partial $G$-compressed path to the distinguished $A$ in $t_l[A[t_r]]$. 
\item $\centr(\val_{\mathsf{ur,uo}}(G))$ belongs to the subcontext $\val(A)$ in
$\val(t_l)[\val(A)[ \val(t_r)]]$.
\end{itemize}
For a call $\centr(t_l, A, t_r, p)$, the algorithm distinguishes on the right-hand side of $A$.
If this right-hand side has the form $A(B)$ or $A(B(y))$, then, by comparing
$\ecc(t_l[B(y)])$ and $h(C[t_r])$, we determine,
whether the search for the center node has to continue in $B$ or $C$, see Lemma~\ref{lemma:search-center}.

The case that the right-hand side of $A$ has the form $\sigma(A_1,\ldots, A_k)$ is a bit 
more complicated. Let $s_l = \val(t_l)$ and $s_i = \val(A_i)$ (by the first invariant we know that $t_r = \varepsilon$).
We have to find the center node of $t := s_l(\sigma(s_1, \ldots, s_k)$ and by the last invariant we
know that it is contained in $\sigma(s_1, \ldots, s_k)$. We now consider all $k$ many cuts of $t$ along one of the edges
between the $\sigma$-node and one of the $s_i$, i.e., we cut $t$ into
$s_l(\sigma(s_1, \ldots, s_{i-1}, y, s_{i+1}, \ldots, s_k)$ and $s_i$. Using again Lemma~\ref{lemma:search-center},
it suffices to compare $\ecc(s_l(\sigma(s_1, \ldots, s_{i-1}, y, s_{i+1}, \ldots, s_k)))$ and $h(s_i)$ in order to
determine whether the center node belongs to $s_l(\sigma(s_1, \ldots, s_{i-1}, y, s_{i+1}, \ldots, s_k)$
or $s_i$. If for some $i$, it turns out that the center node is in $s_i$, then we continue the search with $A_i$. 
Finally, assume that for all $i$, it turns out that the center node is in 
$s_l(\sigma(s_1, \ldots, s_{i-1}, y, s_{i+1}, \ldots, s_k)$. Since by the last invariant, the
center node is in $\sigma(s_1, \ldots, s_k)$, the $\sigma$-labelled
 node must be the center node. The case of a production 
 $A(y) \rightarrow \sigma(A_1,\ldots A_{s-1},y, A_{s+1},\ldots,A_k)$  can be dealt with similarly.
 
 Note that $|t_l|+|t_r|$ stays bounded by the size of $G$. Hence,
 whenever $\ecc(t)$ and $h(t)$ have to be determined by the algorithm, then $t$
 is a polynomial size  tree build from terminal and nonterminal symbols. By the previous remarks,
 $\ecc(t)$ and $h(t)$ can be computed in polynomial time.
\qed
\end{proof}

\begin{algorithm}[t]
    \begin{algorithmic}
        \Procedure {$\centr$}{$t_l, A, t_r, p$}
        \If {$A \rightarrow B(C)$ (and thus $t_r = \varepsilon$) or $A(y) \rightarrow B(C(y))$}
             \If {$\ecc(t_l[B(y)]) \leq h(C[t_r])$}
                    \State  \textbf{return} $\centr(t_l[B(y)], C, t_r, p \cdot (A,1))$ 
               \Else
                     \State \textbf{return} $\centr(t_l, B, C[t_r], p \cdot (A,\varepsilon))$ 
               \EndIf
        \EndIf
        \If {$A \rightarrow \sigma(A_1,\ldots,A_k)$ (and thus $t_r = \varepsilon$)}
             \State $t_i \gets t_l[\sigma(A_1, \ldots, A_{i-1},y,A_{i+1},\ldots,A_k)]$ for all $i \in [k]$
             \If {there is an $i \in [k]$ such that $\ecc(t_i) \leq h(A_i)$}
                 \State  \textbf{return} $\centr(t_i, A_i, \varepsilon, p \cdot (A,i))$ 
              \Else
                 \State \textbf{return} $(p \cdot (A,\varepsilon))$
              \EndIf
           \EndIf
           \If{$A(y) \rightarrow \sigma(A_1,\ldots A_{s-1},y, A_{s+1},\ldots,A_k)$}
                \State $t_i \gets t_l[\sigma(A_1, \ldots, A_{i-1},y,A_{i+1},\ldots, A_{s-1}, t_r, A_{s+1}, \ldots, A_k)]$ if $i<s$
                \State $t_i \gets t_l[\sigma(A_1, \ldots, A_{s-1},t_r,A_{s+1},\ldots, A_{i-1}, y, A_{i+1}, \ldots, A_k)]$ if $s<i$
             \If {there is an $i \in [k]  \setminus \{s\}$ such that $\ecc(t_i) \leq h(A_i)$}
                 \State  \textbf{return} $\centr(t_i, A_i, \varepsilon, p \cdot (A_i,i))$ 
              \Else
                 \State \textbf{return} $(p \cdot (A,\varepsilon))$
              \EndIf
           \EndIf
        \EndProcedure
    \end{algorithmic}
    \caption{Recursive procedure to find the $G$-compressed path for the center node}
    \label{alg:center_point}
\end{algorithm}

\subsection{Re-Rooting of SLT Grammars} \label{sec:reroot}

Let $G=(N,\Sigma, S,P)$ be an SLT grammar (as usual, having the normal form 
from Lemma~\ref{obs})
and $p$ a $G$-compressed path.
Let $s(p) \in T_{\Sigma\cup N}$ be the tree defined inductively as follows:
Let $(A \to t) \in P$ and $u \in D(t)$. Then $s((A,u))=t$.
If $p = (A,t) p'$ with $p'$ non-empty, then either (i) $u=\varepsilon$ and $t = B(C)$  or 
(ii) $u = i \in \nat$ and $t[i] \in N^{(0)}$. In case (i) we set
$s(p) = s(p')[C]$, in case (ii) we set $s(p) = t'[s(p')]$, where $t'(y)$ is obtained from $t$
by replacing the $i$-th argument of the root by $y$.
Note that $s(p')\in \mathcal{C}_{\Sigma\cup N}(\{y\})$ if $p'$ starts with a nonterminal of rank $1$.
Let $s = s(p)$; its size is bounded by the size of $G$.
Note that $s[u(p)]$ is a terminal symbol (recall that $u(p)$ denotes the concatenation of the 
Dewey addresses in $p$). Assume that $s[u(p)]=\sigma\in\Sigma$.
Let $\#$ be a fresh symbol and let $s'$  be obtained from $s$ by changing the label at $u(p)$ 
from $\sigma$ to $\#$.
Let $s'\Rightarrow_G^* s''$ be the shortest derivation
such that $s''[\varepsilon]=\delta\in\Sigma$ (it consists of at most $|N|$ derivation steps).
We denote the $\#$-labeled node in $s''$ by $u$. Finally, let $t$ be obtained from $s''$
by changing the unique $\#$ into $\sigma$. 
We define the \emph{$p$-expansion} of $G$, denoted $\ex_G(p)$, 
as the tuple $(t,u,\sigma,\delta)$. 
Note that $\val_G(p)$ is the unique $\#$-labelled node in $\val_G(s'')$.
Moreover, the  $p$-expansion can be computed in polynomial time from
$G$ and $p$.

The $p$-expansion $(t,u,\sigma,\delta)$ 
has all information needed to construct a grammar $G'$ 
representing the rooted version at $p$ of $\val(G)$.
If $u=\varepsilon$ then also $\val_G(p)=\varepsilon$. Since $G$ is
already rooted at $\varepsilon$ nothing has to be done in this 
case and we return $G'=G$.
If $u\not=\varepsilon$ then $\val_G(p)\not=\varepsilon$ and hence
$t$ contains two terminal nodes which uniquely represent the 
root node and the node $\val_G(p)$ of the tree $\val(G)$.

Let $s_1\in T_\Sigma$ be a rooted ordered tree representing
the unrooted unordered tree $\tilde{s}_1=\ur(\uo(s_1))$.
Let $u\not=\varepsilon$ be a node of $s_1$.
Let $s_1[\varepsilon]=\delta\in\Sigma$ and $s_1[u]=\sigma\in\Sigma$.
A rooted ordered tree $s_2$ that represents the rooted unordered
tree $\tilde{s}_2=\rooted(\tilde{s}_1,u)$ can be defined as follows:
Since $u\neq\varepsilon$, we can write 
$$
s_1=\delta(\zeta_1,\ldots, \zeta_{i-1},  t'[\sigma(\xi_1,\dots,\xi_m)], \zeta_{i+1},\ldots, \zeta_k),
$$
where $t'$ is a context, and $u = i u'$, where $u'$ is the Dewey address of the parameter $y$
in $t'$. We can define $s_2$ as
$$
s_2=\sigma(\xi_1,\dots,\xi_m,\rooty(t')[\delta(\zeta_1,\dots,\zeta_{i-1},\zeta_{i+1},\dots,\zeta_k)]),
$$
where $\rooty$ is a function mapping contexts to contexts 
defined recursively as follows, where $f\in\Sigma$,
$t_1,\dots, t_{i-1}, t_{i+1}, \ldots, t_\ell\in T_\Sigma$, and $t(y), t'(y)\in \mathcal{C}_\Sigma$:
\begin{eqnarray}
\rooty(y) & = & y   \label{rooty-1} \\
\rooty(f(t_1, \ldots, t_{i-1},y,t_{i+1},\ldots, t_\ell)) &=& f(t_1,\dots,t_{i-1},y,t_{i+1},\dots,t_\ell) \label{rooty-2} \\
\rooty( t[t'(y)] ) & = & \rooty(t')[\rooty(t(y))] \label{eq:yroot}
\end{eqnarray}
Intuitively, the mapping $\rooty$ unroots a context $t(y)$ towards 
its $y$-node $u$, i.e., it reverses the path from the root to $u$. 
Thus, for instance,
$\rooty(f(a,y,b))=f(a,y,b)$ and
$\rooty(f(a,g(c,y,d),b))=g(c,f(a,y,b),d)$.

\begin{lemma}\label{lm:unroot}
From a given SLT grammar $G$  and a $G$-compressed path $p$ one
can construct in polynomial time an SLT grammar $G'$ such that
$\val_{\uo}(G')$ is isomorphic to $\rooted(\val_{\ur,\uo}(G),\val_G(p))$.
\end{lemma}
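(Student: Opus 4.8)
The plan is to unfold the definitions already set up before the statement and then push the re-rooting operation through an SLT grammar rule by rule, exploiting that the relevant auxiliary objects ($s(p)$, the $p$-expansion, the function $\rooty$) all have descriptions of size polynomial in $|G|$. First I would compute the $p$-expansion $\ex_G(p) = (t,u,\sigma,\delta)$ in polynomial time, as noted in the excerpt. If $u=\varepsilon$ we are done by returning $G'=G$, so assume $u\neq\varepsilon$. The tree $t$ then has the shape $\delta(\zeta_1,\dots,\zeta_{i-1}, t'[\sigma(\xi_1,\dots,\xi_m)],\zeta_{i+1},\dots,\zeta_k)$ displayed above, except that $t$ is written over $\Sigma\cup N$ rather than fully expanded; the target tree we want $G'$ to produce is the corresponding $s_2 = \sigma(\xi_1,\dots,\xi_m,\rooty(t')[\delta(\zeta_1,\dots,\zeta_{i-1},\zeta_{i+1},\dots,\zeta_k)])$, but again built from grammar symbols so that its $\val$ is the desired re-rooted tree.

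The heart of the construction is to realize the $\rooty$ operation on the compressed context $t'$. The context $t'$ is, after reading off the $p$-expansion, presented as a composition of context pieces, each piece being either a right-hand side of a type-(3)/type-(4) production of $G$ with a single nonterminal replaced by the hole $y$, or (for the iterated type-(4) steps along $p$) an SLP-like chain of unary nonterminals. Equation~\eqref{eq:yroot} says that $\rooty$ reverses the order of a composition of contexts while applying $\rooty$ to each factor, and \eqref{rooty-2} says that on a single ``shallow'' context $f(t_1,\dots,y,\dots,t_\ell)$ the map $\rooty$ is the identity. So concretely: for the monadic (type-(4)) portion of $p$, $\rooty$ just reverses a string of unary nonterminals, which I can produce with a new SLP obtained by reversing the SLP for that string (reversing an SLP costs linear time); for each type-(3) step I introduce one new type-(3) nonterminal whose right-hand side is the same as the original one (by \eqref{rooty-2}, no reordering of arguments is needed at this stage, since order is irrelevant for $\uo$ anyway), but whose hole $y$ is relocated to the child pointing back along $p$. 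Composing these reversed pieces in reversed order, and finally plugging in the start nonterminal producing $\delta(\zeta_1,\dots,\zeta_{i-1},\zeta_{i+1},\dots,\zeta_k)$ (obtained from $G$'s productions by deleting one child and, if necessary, splitting off the subtrees $\zeta_j$ as separate nonterminals), and then prepending a fresh start production $S' \to \sigma(\xi_1,\dots,\xi_m, \text{(the composed reversed context applied to that tree)})$, yields $G'$. Correctness is proved by induction on the length of $p$ (equivalently, on the recursive structure of $s(p)$), using \eqref{rooty-1}--\eqref{eq:yroot} and the fact that $\val$ commutes with context composition; one only needs $\val_{\uo}(G')$ to be isomorphic to $\rooted(\val_{\ur,\uo}(G),\val_G(p))$, so the ordering chosen inside reversed right-hand sides is immaterial.

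The main obstacle, and the part deserving the most care, is bookkeeping the ``side trees'' $\xi_1,\dots,\xi_m$ and $\zeta_1,\dots,\zeta_{i-1},\zeta_{i+1},\dots,\zeta_k$ and the intermediate hole positions \emph{symbolically} at the grammar level. In the uncompressed picture these are explicit subtrees, but here each $\zeta_j$ and $\xi_\ell$ is only a nonterminal (or a small subtree over $\Sigma\cup N$) sitting in a right-hand side of $G$, and the hole has to be threaded through exactly those nonterminals that lie on the path $p$ — which is encoded by the Dewey addresses $u_1,\dots,u_n$ of $p$ and by the $p$-expansion. I expect the clean way to handle this is: for each pair $(A_i,u_i)$ along $p$ with $(A_i\to t_i)\in P$, create a new nonterminal $\hat A_i$ of rank $1$ whose right-hand side is $t_i$ with the subtree at $u_i$ replaced by the parameter $y$ and — crucially — with $y$ moved outward to the parent position (so that $\hat A_i$ now produces the reversed piece), and then set $G'$ to chain $\hat A_n, \hat A_{n-1}, \dots, \hat A_1$ in that reversed order, bottoming out in a nonterminal that produces the original root right-hand side minus the child leading into $p$. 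Verifying that this chaining is well-typed (ranks match, acyclicity is preserved since the new nonterminals form a fresh linearly ordered block above $G$'s nonterminals, every production still has one of the four normal-form shapes after a final application of Lemma~\ref{obs}), and that the total number of fresh nonterminals is $O(|p|)=O(|N|)$ plus the $O(|G|)$ symbols for reversing the embedded SLPs, gives the polynomial bound; and unwinding the definitions of $s(p)$ and $\ex_G(p)$ together with \eqref{rooty-1}--\eqref{eq:yroot} gives $\val_{\uo}(G')\cong\rooted(\val_{\ur,\uo}(G),\val_G(p))$.
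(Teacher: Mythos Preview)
Your approach is essentially the same as the paper's and is correct. The paper's proof is organized a little more cleanly: rather than walking along $p$ step by step, it works entirely with the $p$-expansion $(t,u,\sigma,\delta)$ and introduces a primed copy $A'$ of \emph{every} rank-$1$ nonterminal $A$, with productions $A'(y)\to C'(B'(y))$ whenever $A(y)\to B(C(y))$ is type~(4), and $A'(y)\to\zeta$ (unchanged right-hand side) whenever $A(y)\to\zeta$ is type~(3). One then extends $\rooty$ to contexts over $\Sigma\cup N$ by declaring $\rooty(A(y))=A'(y)$, applies it symbolically to the explicit context $t'$, and proves by induction on the grammar hierarchy that $\val_{G'}(A')=\rooty(\val_G(A))$. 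This is exactly your ``reverse the embedded SLP'' step, just packaged uniformly for all rank-$1$ nonterminals at once.

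Two small remarks on your write-up. First, the phrase ``whose hole $y$ is relocated to the child pointing back along $p$'' is misleading: by \eqref{rooty-2}, $\rooty$ is the identity on shallow contexts, so the parameter position in a type-(3) right-hand side does \emph{not} move; the reversal happens purely in the composition order via \eqref{eq:yroot}. Second, your third paragraph's per-step $\hat A_i$ construction is largely redundant with the SLP-reversal idea of your second paragraph and is less direct than simply priming all rank-$1$ nonterminals uniformly; the paper's version avoids any case analysis along $p$ and makes the inductive correctness claim $\val_{G'}(A')=\rooty(\val_G(A))$ a two-line check.
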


\begin{proof}
Let $G=(N,\Sigma,S,P)$ and $\ex_G(p)=(t,u,\sigma,\delta)$. 
If $u=\varepsilon$ then define $G'=G$.
If $u\not=\varepsilon$ then we can write 
\begin{equation} \label{splitting-of-t}
t = \delta(B_1, \ldots, B_{i-1}, t'[ \sigma(\xi_1,\dots,\xi_m)], B_{i+1},\ldots, B_k),
\end{equation}
where $B_j \in N^{(0)}$, $\xi_j \in T_{N}$,
$t'$ is a context composed of nonterminals $A \in N^{(1)}$ and 
contexts $f(\zeta_1, \ldots, \zeta_{j-1}, y, \zeta_{j+1},\ldots, \zeta_l)$
($f \in \Sigma$, $\zeta_j \in T_{N}$),  and $u = i u'$, where $u'$ is the Dewey address of the parameter $y$
in $t'$.

We define $G'=(N\uplus N',\Sigma,S,P')$ where $N'=\{A'\mid A\in N^{(1)}\}$.
To define the production set $P'$, we extend the definition of $\rooty$ to contexts 
from $\mathcal{C}_{\Sigma \cup N}$ by~(\emph{i}) 
allowing in the trees $t_j$ from Equation~\eqref{rooty-2}  also nonterminals, and 
(\emph{ii})~defining for every $B \in N^{(1)}$, $\rooty(B(y)) = B'(y)$.
We now define the set of productions $P'$ of $P$ as follows:
We put all productions from $P$ except for the start production $(S \to s) \in P$ into
$P'$. For the start variable $S$ we add to $P'$ the production 
$$
S\to \sigma(\xi_1,\dots,\xi_m,\rooty(t')[\delta(B_1,\dots,B_{i-1},B_{i+1},\ldots,B_k)]).
$$
Moreover, let $A\in N^{(1)}$ and $(A(y) \to \zeta) \in P$.
If this is a type-(3) production, then we add 
$A'(y)\to\zeta$ to $P'$.
If $\zeta=B(C(y))$ then add $A'(y)\to C'(B'(y))$ to $P'$.

\medskip
\noindent
\emph{Claim:}
Let $A\in N^{(1)}$. Then $\val_{G'}(A')=\rooty(\val_G(A))$.

\medskip
\noindent
The claim is easily shown by induction on the reverse hierarchical structure of $G$:
Let $(A \to t_A) \in P$. If $t_A =f(A_1,\dots,A_j,y,A_{j+1},\dots,A_l)$ then $\rooty(\val_G(A))=\val_G(A)$. 
Since $(A' \to t_A) \in P'$ and $G'$ contains all productions of $G$ except for the start production,
we obtain $\val_{G'}(A')=\rooty(\val_G(A))$.
If $t_A=B(C(y))$ then, by Equation~\eqref{eq:yroot},
$\rooty(\val_G(B(C(y))))=\rooty(\val_G(C))[\val_G(B)]$.
By induction the latter is equal to $\val_{G'}(C')[\val_{G'}(B')]$
which equals $\val(A')$ by the definition of the right-hand side of $A'$. 
This proves the claim.

\medskip

\noindent
The above claim implies that $\val_{G'}(\rooty(c(y)))=\rooty(\val_G(c(y)))$ for every context $c(y)$ that is composed of 
 contexts $f(\zeta_1, \ldots, \zeta_{j-1}, y, \zeta_{j+1},\ldots, \zeta_l)$
($\zeta_j \in T_{N}$) and nonterminals $A \in N^{(1)}$.
In particular, $\val_{G'}(\rooty(t')) = \rooty(\val_G(t'(y)))$ for the context $t'$ from Equation~\eqref{splitting-of-t}.
Hence, with $s_j=\val_{G'}(\xi_j)=\val_G(\xi_j)$
and $t_j = \val_{G'}(B_j) = \val_G(B_j)$ we obtain
\begin{eqnarray*}
\val(G') &=& \val_{G'}(\sigma(\xi_1,\dots,\xi_m,\rooty(t')[\delta(B_1,\dots,B_{i-1},B_{i+1},\ldots,B_k)])) \\
& = & \sigma( s_1,\dots, s_m, \val_{G'}(\rooty(t'))[ \delta(t_1,\dots,t_{i-1},t_{i+1},\ldots,t_k) ]) \\
& = & \sigma( s_1,\dots, s_m, \rooty(\val_{G}(t'))[ \delta(t_1,\dots,t_{i-1},t_{i+1},\ldots,t_k) ]) .
\end{eqnarray*}
Since $\val(G) = \delta(t_1, \ldots, t_{i-1}, \val_G(t')[ \sigma(s_1,\dots,s_m)], t_{i+1},\ldots, t_k)$, it follows
that $\val_{\mathsf{uo}}(G')$ is isomorphic to $\rooted(\val_{\mathsf{ur,uo}}(G),\val_G(p))$.
\qed
\end{proof}

\begin{corollary} \label{coro-ur-uo-iso}
The problem of deciding whether 
$\val_{\ur,\uo}(G_1)$ and $\val_{\ur,\uo}(G_2)$
are isomorphic for given  SLT grammars $G_1$ and $G_2$ is \Ptime-complete.
\end{corollary}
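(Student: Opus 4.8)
The plan is to combine the three reductions developed in this section. For membership in \Ptime, given SLT grammars $G_1$ and $G_2$, I would first pass to $\even(G_1)$ and $\even(G_2)$, which are computable in polynomial time, produce trees of even diameter (hence with a unique center node), and preserve isomorphism, since $\val_{\ur,\uo}(\even(G_i)) = \even(\val_{\ur,\uo}(G_i))$ and $t$ is isomorphic to $s$ iff $\even(t)$ is isomorphic to $\even(s)$. Next I would apply Lemma~\ref{lem:center_points} to obtain $\even(G_i)$-compressed paths $p_i$ for the center nodes, and then Lemma~\ref{lm:unroot} to build in polynomial time SLT grammars $G_i'$ such that $\val_{\uo}(G_i')$ is isomorphic to the rooting of $\val_{\ur,\uo}(\even(G_i))$ at its center node. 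Finally Corollary~\ref{theo:iso-uo} decides in polynomial time whether $\val_{\uo}(G_1')$ and $\val_{\uo}(G_2')$ are isomorphic. Correctness is immediate from the fact recorded above that two unrooted unordered trees of even diameter are isomorphic iff their rootings at their unique center nodes are isomorphic.

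For \Ptime-hardness I would reduce from the rooted unordered SLT-isomorphism problem, which is \Ptime-hard by Corollary~\ref{theo:iso-uo} --- in fact already for dags, by~\cite{LohreyM13}. Given an SLT grammar $G$ over $\Sigma$, fix a fresh symbol $\$\notin\Sigma$ and modify $G$ so that it appends one extra child, a new $\$$-labelled leaf, to the root; on the level of grammars this only changes the start production (and adds one rank-$0$ production for the fresh leaf), so it is a polynomial-time and dag-preserving transformation. In the resulting unrooted tree $\val_{\ur,\uo}$ there is a unique $\$$-labelled node, it is a leaf, and its unique neighbour is the old root. Hence any isomorphism between two such modified unrooted trees must map the $\$$-node to the $\$$-node, and therefore the old root to the old root; deleting the two $\$$-leaves then yields an isomorphism of the original rooted unordered trees, and conversely any such rooted isomorphism extends. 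So the modified unrooted trees are isomorphic iff $\val_{\uo}(G)$ and $\val_{\uo}(G')$ are isomorphic for the two input grammars.

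I expect no genuine obstacle here; the only care needed is routine bookkeeping: checking that the two successive reductions (make the diameter even, then root at the center) compose and preserve isomorphism in both directions, that $\val_{\ur,\uo}(\even(G))$ really does have even diameter and hence a unique center node so that Lemma~\ref{lem:center_points} applies, and that in the hardness reduction the pendant $\$$-leaf genuinely pins down the old root --- which it does, because $\$$ occurs nowhere else and so the $\$$-node cannot be matched to any other node.
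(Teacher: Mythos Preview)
Your proposal is correct and follows the paper's approach essentially verbatim: the membership argument is exactly the three-step plan (make the diameter even, compute the center via Lemma~\ref{lem:center_points}, re-root via Lemma~\ref{lm:unroot}, then invoke Corollary~\ref{theo:iso-uo}) that the paper lays out before Section~\ref{sec:find-center}. For hardness the paper uses the slightly simpler gadget of relabelling the root itself with a fresh symbol rather than attaching a fresh-labelled pendant leaf, but the two reductions are interchangeable and your argument that the pendant pins down the root is sound.
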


\begin{proof}
The upper bound follows from  Lemma~\ref{lem:center_points}, Lemma~\ref{lm:unroot}, and Corollary~\ref{theo:iso-uo}.
Hardness for \Ptime{} follows from the \Ptime-hardness for dags \cite{LohreyM13} and the fact that isomorphism of rooted unordered
trees can be reduced to isomorphism of unrooted unordered trees by labelling the roots with a fresh symbol.
\qed
\end{proof}

\section{Bisimulation on SLT-compressed trees}

\newcommand{\bcanon}{\mathsf{bcanon}}

Fix a set $\Sigma$ of node labels.
Let $G = (V,E,\lambda)$ be a directed node-labelled graph, i.e., $E \subseteq V \times V$ is the edge
relation and $\lambda : V \to \Sigma$ is the labelling function. A binary relation $R \subseteq V \times V$
is a bisimulation on $G$, if for all $(u,v) \in R$ the following three conditions hold:
\begin{itemize}
\item $\lambda(u) = \lambda(v)$
\item If $(u,u') \in E$, then there exists $v' \in V$ such that $(v,v') \in E$ and $(u',v') \in R$.
\item If $(v,v') \in E$, then there exists $u' \in V$ such that $(u,u') \in E$ and $(u',v') \in R$.
\end{itemize}
Let the relation $\sim$ be the union of all bisimulations on $G$. It is itself a bisimulation (and hence the largest
bisimulation) and an equivalence
relation. Two rooted unordered trees $s, t$ with node labels from $\Sigma$ and roots $r_s, r_t$
are bisimulation equivalent
if $r_s \sim r_t$ holds in the disjoint union of $s$ and $t$. For instance, the trees $f(a,a,a)$ and $f(a,a)$ are bisimulation
equivalent but the trees 
$f(g(a),g(b))$ and $f( g(a,b))$ are not.

For a rooted unordered tree $t$ we define the bisimulation canon $\bcanon(t)$ inductively as follows:
 Let $t = f(t_1,\ldots, t_n)$ ($n \geq 0$) and let $b_i = \bcanon(t_i)$. Let $s_1, \ldots, s_m$ be a list of trees
such that (i) for every $i \in [m]$, $s_i$ is isomorphic to one of the $b_j$, and  (ii) for every $i \in [n]$ there
is a unique $j \in [m]$ such that $s_i$ and $b_j$ are isomorphic as rooted unordered trees.
Then $\bcanon(t) = f(s_1,\ldots, s_m)$.
In other words: Bottom-up, we eliminate repeated subtrees among the children of a node.
For instance, $\bcanon(f(a,a,a)) = f(a) = \bcanon(f(a,a))$.
The following lemma can be shown by a straightforward induction on the height of trees.

\begin{lemma} \label{lemma:bcanon}
Let $s$ and $t$ be rooted unordered trees. Then $s$ and $t$ are bisimulation equivalent if
and only if $\bcanon(s)$ and  $\bcanon(t)$ are isomorphic.
\end{lemma}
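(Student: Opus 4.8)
The plan is to prove the biconditional by induction on the height of the trees, establishing it simultaneously for all pairs $s,t$. The ingredient I would set up first is a recursive characterization of bisimilarity on rooted unordered trees: for $s = f(s_1,\dots,s_k)$ with root $r_s$ and $t = g(t_1,\dots,t_\ell)$ with root $r_t$, the roots $r_s$ and $r_t$ are bisimilar (in the disjoint union $s \uplus t$) if and only if $f = g$ and the two sets of bisimulation-equivalence classes $\{[r_{s_i}] : i \in [k]\}$ and $\{[r_{t_j}] : j \in [\ell]\}$ coincide. The ``$\Leftarrow$'' direction is witnessed by the relation pairing two nodes exactly when the subtrees rooted at them are bisimulation equivalent; checking the three bisimulation conditions for this relation is immediate. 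The ``$\Rightarrow$'' direction follows since any bisimulation containing $(r_s,r_t)$ must, by the last two bisimulation conditions, relate every child of $r_s$ to some child of $r_t$ and vice versa. The one point needing a (standard) remark is \emph{locality}: bisimilarity of two nodes depends only on the parts of the graph reachable from them, so $r_{s_i}\sim r_{t_j}$ holds in $s\uplus t$ iff it holds in $s_i\uplus t_j$, i.e.\ iff $s_i$ and $t_j$ are bisimulation equivalent as trees, and likewise $r_{s_i}\sim r_{s_{i'}}$ in $s\uplus t$ iff $s_i$ and $s_{i'}$ are bisimulation equivalent. This is exactly what makes the induction hypothesis (a statement about pairs of trees) applicable to the subtrees.

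For the induction step I would write $s = f(s_1,\dots,s_k)$, $t = g(t_1,\dots,t_\ell)$ and set $b_i = \bcanon(s_i)$, $b'_j = \bcanon(t_j)$. By the induction hypothesis, for all $i,j$ the tree $s_i$ is bisimulation equivalent to $t_j$ iff $b_i$ and $b'_j$ are isomorphic, and similarly for two subtrees on the same side; hence equality of the two sets of bisimulation classes of the children of $r_s$ and of $r_t$ is equivalent to equality of the two sets of isomorphism classes $\{[b_i]:i\in[k]\}$ and $\{[b'_j]:j\in[\ell]\}$. By the definition of $\bcanon$, the tree $\bcanon(s)$ is $f$ applied to a list containing exactly one representative of each isomorphism class occurring among $b_1,\dots,b_k$, and $\bcanon(t)$ is built the same way from $b'_1,\dots,b'_\ell$. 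Therefore $\bcanon(s)$ and $\bcanon(t)$ are isomorphic iff $f = g$ and $\{[b_i]\} = \{[b'_j]\}$, which by the previous sentence is equivalent to $f = g$ together with equality of the sets of bisimulation classes of the children, and hence, by the recursive characterization, to $s$ and $t$ being bisimulation equivalent. The base case (height $0$) is the trivial observation that single-node trees $f$ and $g$ are bisimulation equivalent iff $f = g$ iff $\bcanon(f)$ and $\bcanon(g)$ are isomorphic.

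This induction establishes the stated lemma directly. For orientation I would also note that the ``$\Leftarrow$'' direction can be obtained more cheaply: a straightforward induction on height shows that every rooted unordered tree $t$ is bisimulation equivalent to $\bcanon(t)$ (at each level $\bcanon$ only deletes children whose subtree is isomorphic, hence bisimulation equivalent, to a retained sibling, leaving the set of bisimulation classes of children unchanged), so that $\bcanon(s)\cong\bcanon(t)$ together with transitivity of $\sim$ and its closure under isomorphism yields $s\sim\bcanon(s)\cong\bcanon(t)\sim t$. The only genuinely non-mechanical part of the whole argument is the recursive characterization of bisimilarity on trees and the locality remark underpinning it; once those are in place, the rest is bookkeeping.
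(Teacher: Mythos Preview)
Your proof is correct and follows exactly the approach the paper indicates: the paper does not spell out a detailed argument but simply remarks that the lemma ``can be shown by a straightforward induction on the height of trees,'' which is precisely what you do. Your careful articulation of the recursive characterization of bisimilarity and the locality observation fills in the details the paper omits, but the strategy is the same.
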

The proof of the following theorem is similar to those of Theorem~\ref{thm:canon}.

\begin{theorem}\label{thm:bcanon}
From a given SLT grammar $G$ one can compute a new SLT grammar $G'$ such that
$\val_{\uo}(G')$ is isomorphic to $\bcanon(\val_{\uo}(G))$.
\end{theorem}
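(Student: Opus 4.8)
The plan is to follow the structure of the proof of Theorem~\ref{thm:canon}, replacing the ``reorder the children'' operation by the ``eliminate duplicate children'' operation, and replacing the length-lexicographical comparisons by isomorphism tests provided by Corollary~\ref{theo:iso-uo}. Using Theorem~\ref{thm:canon}, we may first canonize $G$, so we can assume that for all nonterminals $A\in N^{(0)}$ the tree $\val_G(A)$ is already in canonical form; in particular, isomorphism of $\val_G(A)$ and $\val_G(B)$ coincides with equality, and by Lemma~\ref{lm:order} we can test this in polynomial time and merge equivalent rank-$0$ nonterminals. As in the proof of Theorem~\ref{thm:canon}, we process the rank-$0$ nonterminals in the hierarchical order, maintaining the invariant that for every already-processed $A\in M^{(0)}\setminus\{Z\}$ we have $\val_{G'}(A) = \bcanon(\val_G(A))$ (up to isomorphism, hence after re-canonization, up to equality).

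For a nonterminal $Z$ of type~(1) with production $Z\to\sigma(A_1,\dots,A_k)$, we compute (using the already-processed values and Lemma~\ref{lm:order}) the partition of $\{A_1,\dots,A_k\}$ into isomorphism classes, pick one representative $A_{i_1},\dots,A_{i_\ell}$ per class, and replace the production by $Z\to\sigma(A_{i_1},\dots,A_{i_\ell})$; this yields $\val_{G'}(Z)=\bcanon(\val_G(Z))$. The hard case is again type~(2), a production $Z\to B(A)$, giving rise to the maximal $G(4)$-derivation $B(A)\Rightarrow_{G(4)}^* B_1 B_2\cdots B_N(A)$ with $B_i$ of type~(3) and $N$ possibly exponential. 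The key point, exactly as before, is that the string $B_1\cdots B_N$ can be split into polynomially many blocks such that within each block every occurrence of a given type-(3) nonterminal $C(y)\to\sigma(A_1,\dots,A_j,y,A_{j+1},\dots,A_k)$ is applied to ``the same kind'' of argument and hence its duplicate-removal can be done uniformly. For bisimulation, ``the same kind of argument'' has to take into account which of the $A_1,\dots,A_k$ the substituted subtree is isomorphic to, so that we correctly decide whether to keep the parameter slot or delete it (it is deleted precisely when the tree plugged into $y$ is isomorphic to some $\val_G(A_\ell)$). Concretely: for a type-(3) nonterminal $C$ occurring in $B_1\cdots B_N$ with right-hand side $\sigma(A_1,\dots,A_j,y,A_{j+1},\dots,A_k)$, the relevant information about the subtree $t$ substituted for $y$ is the isomorphism type of $t$ restricted to the finitely many candidates $\val_G(A_1),\dots,\val_G(A_k)$ — i.e., which $A_\ell$ (if any) satisfies $\val_G(A_\ell)\cong t$. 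So we refine the interval decomposition of $T_\Sigma$ used in Theorem~\ref{thm:canon}: we take the relevant set of nonterminals $\mathcal{A}$ to be the reachable rank-$0$ nonterminals $S_1,\dots,S_m$ (as before), and for each occurrence position $k\in[N]$, the subtree $t_k=\val_G(B_k B_{k+1}\cdots B_N(A))$ is, in the derivation from $Z$, substituted into the parameter of $B_{k-1}$; the relevant datum is the unique $S_i$ (if any) with $\val_G(S_i)\cong t_k$, or the fact that $t_k$ is isomorphic to none of them. By the invariant, $\val_G(S_i)$ are pairwise non-isomorphic canons, so ``$\val_G(t_k)\cong\val_G(S_i)$'' can be tested in polynomial time via Lemma~\ref{lm:order} after computing an SLT grammar for $t_k$ (using Lemma~\ref{lm:region} on the SLP of type-(4) productions, as in Theorem~\ref{thm:canon}).

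We therefore define boundary positions $k_1,\dots,k_m$ and blocks $u_m u_{m-1}\cdots u_0$ of $B_1\cdots B_N$ exactly as in the proof of Theorem~\ref{thm:canon}, with $[k_{i+1}+1,k_i]$ being the set of positions $k$ with $t_k\in\mathcal{I}_i$. Note $\val_G(t_k)\cong\val_G(S_i)$ holds for at most one $i$, and when it does it forces $k=k_i$ (the left endpoint of the block $u_i$, since $t_{k+1}\llex t_k$ is strict — here we use that distinct rank-$0$ nonterminals have non-equal, hence after canonization different-length-or-lex, values); thus within the open part of block $u_i$ the parameter is substituted by a tree isomorphic to none of $S_1,\dots,S_m$, while at the single boundary position it may coincide with $S_i$. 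We build SLPs $G_i$ for the blocks $u_i$ by Lemma~\ref{lm:region}, view them as $1$-SLT grammars of type-(4) productions, and for each type-(3) nonterminal $C(y)\to\sigma(A_1,\dots,A_j,y,A_{j+1},\dots,A_k)$ occurring in $u_i$ add the reordered-and-deduplicated production $C(y)\to\sigma(\text{representatives of distinct classes among }A_1,\dots,A_k, y, \dots)$ with the $y$-slot kept iff $\val_G(S_i)$ is not isomorphic to any $\val_G(A_\ell)$ — and additionally handle the single boundary nonterminal $B_{k_i}$, whose argument is exactly $\val_G(S_{i'})$ for the appropriate $i'$, by a separate fresh nonterminal with the correctly deduplicated right-hand side; this is a polynomial-size overhead. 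We then replace $Z\to B(A)$ by $Z\to X_m X_{m-1}\cdots X_0(A)$ (with the boundary nonterminals spliced in), take the disjoint union of $G$ with all the $G_i$, and re-canonize; the resulting grammar $G'$ satisfies $\val_{\uo}(G')\cong\bcanon(\val_{\uo}(G))$ by Lemma~\ref{lemma:bcanon} and induction. The size and running-time analysis is identical to that of Theorem~\ref{thm:canon}: each of the $|N^{(0)}|$ phases adds only polynomially many productions measured against the initial grammar, so the total size stays polynomial.

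The main obstacle I expect is the careful bookkeeping at the block boundaries: unlike in Theorem~\ref{thm:canon}, where reordering is the same for any tree in $\mathcal{I}_i$ and the boundary cases cause no trouble, here duplicate-elimination genuinely depends on whether the substituted tree equals one specific $\val_G(S_i)$, which happens exactly at the block endpoints. One must make sure each block's ``interior'' occurrences and its single ``boundary'' occurrence of a type-(3) nonterminal get different (but correct) deduplicated right-hand sides, without blowing up the grammar — and one must double-check that an endpoint position $k_i$ indeed corresponds to $t_{k_i}\cong\val_G(S_i)$ rather than merely $t_{k_i}\lgeqx\val_G(S_i)$, using the strictness $t_{k+1}\llex t_k$ together with the normalization that distinct rank-$0$ nonterminals have distinct values. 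Everything else is a routine transcription of the Theorem~\ref{thm:canon} argument with ``$\lleqx$-sorting'' replaced by ``isomorphism-class deduplication'' and ``$\llex$-comparison'' replaced by ``isomorphism test (Corollary~\ref{theo:iso-uo})''.
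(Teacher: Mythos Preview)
Your overall strategy---factorize the spine $B_1\cdots B_N$, identify polynomially many ``boundary'' positions at which the substituted subtree might coincide with a sibling, and modify only those boundary productions---is exactly the paper's strategy. But the paper realizes it more simply, and your write-up contains two genuine slips.

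\textbf{A simpler decomposition.} You carry over the full $\llex$-interval machinery from Theorem~\ref{thm:canon}. The paper does not: it observes that the sizes $|t_{N+1}| < |t_N| < \cdots < |t_1|$ are \emph{strictly} increasing (each type-(3) step adds at least the root node), so for each value $n_C=|\val_G(C)|$ with $C\in M^{(0)}\setminus\{Z\}$ there is at most one position $i$ with $|t_i|=n_C$. Only those at-most-$m$ positions can have $t_i$ isomorphic to some $\val_G(C)$; they are found by binary search on size, then filtered by an isomorphism test (Corollary~\ref{theo:iso-uo}). All other positions are ``interior'' and, after the preprocessing that removes duplicate siblings from type-(3) right-hand sides, need no modification at all. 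This is both simpler and avoids the bookkeeping you flag as the ``main obstacle.''

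\textbf{Wrong primitive at a boundary.} You write that at a boundary ``the $y$-slot \ldots\ is deleted precisely when the tree plugged into $y$ is isomorphic to some $\val_G(A_\ell)$.'' Deleting the parameter turns the production into a rank-$0$ one and destroys the chain $B_1\cdots B_{k-1}(\,\cdot\,)$; you never say how composition above that point continues. The paper keeps $y$ and instead deletes the duplicate \emph{sibling} $A_\ell$: if $B_{i_j-1}(y)\to\sigma(A_1,\dots,A_l,y)$ and $\val_G(A_i)\cong t_{i_j}$, the new production is $C_j(y)\to\sigma(A_1,\dots,A_{i-1},A_{i+1},\dots,A_l,y)$. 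Semantically both choices collapse the duplicate, but only the paper's choice preserves the rank-$1$ spine.

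\textbf{Uniform block modification is wrong.} You propose, for \emph{every} occurrence of a type-(3) nonterminal $C$ in block $u_i$, to keep the $y$-slot ``iff $\val_G(S_i)$ is not isomorphic to any $\val_G(A_\ell)$.'' But you yourself note that at interior positions of $u_i$ the substituted tree is isomorphic to \emph{no} $\val_G(S_j)$ whatsoever, so there is never a duplicate to remove there; the condition on $S_i$ would wrongly delete $y$ (or a sibling) throughout the block whenever $S_i$ happens to match a sibling of $C$. The paper leaves the interior blocks untouched (they are just SLPs $G_j$ over the already-preprocessed type-(3) nonterminals) and introduces a fresh nonterminal $C_j$ \emph{only} for the single boundary position $B_{i_j-1}$, with final production $Z\to X_1 C_1 X_2 C_2\cdots X_k C_k(A)$.
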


\begin{proof}
Let $G=(N,\Sigma,S, P)$.
We will add polynomially many new nonterminals to $G$ and change the productions
for nonterminals from $N^{(0)}$ such that
for the resulting SLT grammar $G'$ we have
$\uo(\val_{G'}(Z)) = \bcanon(\uo(\val_G(Z)))$ for every $Z \in N^{(0)}$.

Consider a nonterminal $Z \in N^{(0)}$ and let $M$ be the
set of all nonterminals in $G$ that can be reached from $Z$.
By induction, we can assume that $G$ already satisfies
$\uo(\val_{G}(A))=\bcanon(\uo(\val_G(A)))$ for every $A\in M^{(0)} \setminus \{Z\}$.
Moreover, we can assume that $G$ contains no distinct nonterminals $A_1,A_2 \in N^{(0)}$
such that $\uo(\val_G(A_1))$ and $\uo(\val_G(A_2))$ are isomorphic. This is justified because
by Corollary~\ref{theo:iso-uo} we can test in polynomial time whether
$\uo(\val_G(A_1))$ and $\uo(\val_G(A_2))$ are isomorphic
and  replace $A_2$ by $A_1$ in $G$ in such a case (the tree produced by the new grammar is
isomorphic to $\uo(\val(G))$).
Similarly, if there is a type-(1) production $A\to\sigma(A_1,\dots,A_k)$
such that $A_i = A_j$ for $i < j$, then we remove $A_j$ from the parameter list,
and the same is done for type-(3) productions. These preprocessing steps do
not change the bisimulation canon.
We now distinguish two cases.

\medskip
\noindent
Case (i). $Z$ is of type~(1), i.e., has a production $Z\to\sigma(A_1,\dots,A_k)$.
By the above preprocessing, we already have $\uo(\val_G(Z)) = \bcanon(\uo(\val_G(Z)))$,
so nothing has to be done.

\medskip
\noindent
Case (ii). $Z$ is of type~(2), i.e., has a production $Z\to B(A)$.
For $C \in M^{(0)}$ let $n_C = |\val_G(C)|$ and let

$$J = \{  n_C \mid C \in M^{(0)} \setminus \{Z\}\}.$$
We can compute this
set of numbers easily in a bottom-up fashion.

Consider the maximal $G(4)$-derivation starting from $B(A)$, i.e.,
$$
B(A) \Rightarrow_{G(4)}^* B_1 B_2 \cdots B_N(A),
$$
where $B_i$ is a typ-(3) nonterminal.
Let $t_k = \val_G(B_{k} B_{k+1} \cdots B_N(A))$ for $k \in [N]$ and
$t_{N+1} = \val_G(A)$.
For a given position we
can compute in polyomial time the size $|t_i|$ by first computing
an SLT grammar for $t_i$ and then computing the size of the generated
tree bottom-up. Clearly, the sequence $|t_1|, |t_2|, \ldots, |t_{N+1}|$
is monotonically decreasing. This allows to compute, using binary search,
the set of positions
$$
I = \{ i  \mid i \in [N+1],  |t_i| \in J \}.
$$
Note that $|I| \leq |M^{(0)} \setminus \{Z\}|$ and $N+1 \in I$.
Next, we check in polynomial time, using  Corollary~\ref{theo:iso-uo},
for every position $i \in I$, whether $\uo(t_i)$ is isomorphic to $\uo(\val_G(C))$
for some $C \in M^{(0)} \setminus \{Z\}$.
If such a $j$ exists then we keep $i$ in the set $I$, otherwise
we remove $i$ from $I$. After this step, $I$ contains exactly those positions
$i \in I$ such that  $\uo(t_i)$ is isomorphic to $\uo(\val_G(C))$
for some $C \in M^{(0)} \setminus \{Z\}$.

Assume that  $I = \{ i_1, \ldots, i_k\}$ with $1 \leq i_1 < i_2 < \cdots < i_{k-1} < i_k =N+1$.
We now factorize the string $B_1 B_2 \cdots B_N$ as
$$B_1 B_2 \cdots B_N = u_1 B_{i_1-1} u_2 B_{i_2-1} \cdots  u_k B_{i_{k}-1},$$
where $u_j = B_{i_{j-1}} \cdots B_{i_j-2}$ for $j \in [k]$ (set $i_0 =1$).
By Lemma~\ref{lm:region} we can compute in polynomial time an SLP $G_j$
for the string $u_j$. Moreover, we can compute the nonterminals $B_{i_j-1}$ in
polynomial time. For the further consideration, we view $G_j$
as a $1$-SLT grammar consisting only of type-(4) productions.
Note that $\val(G_j)$ is a linear tree, where every node is labelled with a type-(3)
nonterminal.

We now rename the nonterminals in the SLT grammars $G_j$
so that the nonterminal sets of $G, G_1, \ldots, G_k$ are pairwise disjoint.
Let $X_j(y)$ be the start nonterminal of $G_j$ after the renaming. Then we
add to the current SLT grammar $G$ the union of all the $G_j$. Moreover, for
every $j \in [k]$ we add a new nonterminal $C_j$ to $G$, whose right-hand side
is derived from the right-hand side of $B_{i_j-1}$ as follows:
Let the right-hand side for $B_{i_j-1}$ be
$\sigma(A_1,\dots,A_l,y)$ (we can assume that the parameter occurs at the last argument
position, since this is not relevant for the bisimulation canon). We now check whether there
exists an $A_i$ ($i \in [l]$) such that $\uo(\val_G(A_i))$ is isomorphic to $\uo(t_{i_j})$.
If such an $i$ exists then by our preprocessing it is unique, and we
add to $G$ the production $C_j(y) \to \sigma(A_1,\dots,A_{i-1}, A_{i+1},\ldots, A_l,y)$.
If such an $i$ does not exist, then the new nonterminal $C_j$ is not needed. In order to
keep the notation uniform, let $C_j = B_{i_j-1}$. Finally, we redefine the production for
$Z$ to
$$
Z \to X_1 C_1 X_2 C_2 \cdots  X_k C_k(A).
$$
This concludes the construction of the SLT grammar $G'$.
As in the proof of Theorem~\ref{thm:canon} one can argue that the size of $G'$ is polynomially
bounded in the size of $G$.
\qed
\end{proof}
From  Corollary~\ref{theo:iso-uo}, Lemma~\ref{lemma:bcanon},  and Theorem~\ref{thm:bcanon} we get:

\begin{theorem}
For given SLT grammars $G_1$ and $G_2$ one can check in polynomial time, whether $\val_{\uo}(G_1)$
and $\val_{\uo}(G_2)$ are bisimulation equivalent.
\end{theorem}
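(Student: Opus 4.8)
The plan is to assemble the three ingredients that are already in place: the bisimulation canon and its characterization (Lemma~\ref{lemma:bcanon}), the compressed construction of the bisimulation canon (Theorem~\ref{thm:bcanon}), and the polynomial-time isomorphism test for SLT-compressed unordered trees (Corollary~\ref{theo:iso-uo}). Concretely, given SLT grammars $G_1$ and $G_2$, I would first run the construction from the proof of Theorem~\ref{thm:bcanon} on each of them to obtain SLT grammars $G_1'$ and $G_2'$ with $\val_{\uo}(G_i')$ isomorphic to $\bcanon(\val_{\uo}(G_i))$ for $i\in\{1,2\}$. Then I would invoke Corollary~\ref{theo:iso-uo} to decide whether $\val_{\uo}(G_1')$ and $\val_{\uo}(G_2')$ are isomorphic, and output that answer.

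For correctness, Lemma~\ref{lemma:bcanon} gives that $\val_{\uo}(G_1)$ and $\val_{\uo}(G_2)$ are bisimulation equivalent if and only if $\bcanon(\val_{\uo}(G_1))$ and $\bcanon(\val_{\uo}(G_2))$ are isomorphic; since isomorphism of rooted unordered trees is transitive and $\val_{\uo}(G_i')$ is isomorphic to $\bcanon(\val_{\uo}(G_i))$, this is in turn equivalent to $\val_{\uo}(G_1')$ being isomorphic to $\val_{\uo}(G_2')$, which is precisely what the last step decides. So the answer produced is correct.

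For the running time I have to check that every step is polynomial. The isomorphism test in the last step is polynomial by Corollary~\ref{theo:iso-uo} (if $G_1'$ or $G_2'$ does not already produce ranked trees, one first passes to $\ranked(G_i')$, which takes polynomial time and preserves unordered isomorphism, as observed in Section~\ref{sect:lex}). The only real point is to confirm that the construction of $G_i'$ from $G_i$ in Theorem~\ref{thm:bcanon} runs in polynomial time; this is exactly the size-bound argument carried out there (analogous to Theorem~\ref{thm:canon}): the construction proceeds in $|N^{(0)}|$ phases, and in each phase only polynomially many new type-(3) and type-(4) productions, measured against the initial grammar, are added, so the total blow-up is polynomial.

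Since the statement is a direct corollary of results already proved, there is no genuine obstacle here; the substantive work has been done in Theorem~\ref{thm:bcanon}. The one place where a little care is warranted is checking that the "isomorphic to" relations compose correctly: the grammars $G_i'$ only produce trees \emph{isomorphic to}, not \emph{equal to}, the bisimulation canons (unlike $\canon$ in Section~\ref{sect:lex}, which is produced exactly). This is harmless, though, because both the target relation (bisimulation equivalence) and the test actually used (unordered isomorphism) are invariant under isomorphism, so working up to isomorphism throughout is sound.
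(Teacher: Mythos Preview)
Your proposal is correct and matches the paper's own argument exactly: the theorem is stated as an immediate consequence of Lemma~\ref{lemma:bcanon}, Theorem~\ref{thm:bcanon}, and Corollary~\ref{theo:iso-uo}, just as you outline. Your extra remarks about composing the ``isomorphic to'' relations and about passing to ranked trees are sound and in fact make the justification slightly more explicit than the paper's one-line derivation.
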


\section{Unordered Isomorphism of Non-Linear ST Grammars}
\label{sect:nonlinear}

In this section, we consider ST grammars that are not necessarily linear.

\begin{theorem} \label{thm:non-linear-iso}
The question, whether $\val_{\uo}(G_1)$ and $\val_{\uo}(G_2)$ are isomorphic
for two given  ST grammars $G_1$ and $G_2$ is \PSPACE-hard and in \EXPTIME.
\end{theorem}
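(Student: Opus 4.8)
The plan is to prove the two bounds separately. For the \EXPTIME{} upper bound, the idea is to reduce to the SLT case. By Lemma~\ref{lemma:ST-SLT}, a given ST grammar $G$ can be transformed in exponential time into an equivalent SLT grammar $G'$ with $\val(G') = \val(G)$. The size of $G'$ is at most exponential in $|G|$. Now apply Corollary~\ref{theo:iso-uo}: isomorphism of $\val_{\uo}(G_1')$ and $\val_{\uo}(G_2')$ for SLT grammars is decidable in time polynomial in $|G_1'| + |G_2'|$, hence in time exponential in $|G_1| + |G_2|$. This gives membership in \EXPTIME.

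For the \PSPACE{} lower bound, the plan is a reduction from a \PSPACE-complete problem; the natural source is a reachability/word-acceptance problem for a deterministic (or alternating) Turing machine running in polynomial space, or equivalently the membership problem for ST grammars in tree automata, which is \PSPACE-complete \cite{LoMa06}. The key point is that ST grammars with copying can produce full binary trees of height $2^n$ (as in the introductory example $A_i(y)\to A_{i+1}(A_{i+1}(y))$, $A_n(y)\to f(y,y)$), so a single ST grammar of polynomial size can encode the entire configuration space of a \PSPACE-machine. Concretely, I would build, from a \PSPACE-machine $M$ and input $w$, two ST grammars $G_1, G_2$ whose unordered trees are designed so that isomorphism holds if and only if $M$ rejects (or accepts) $w$: one grammar produces a ``canonical'' exponentially deep tree encoding the computation tree with an accepting leaf marked, the other produces the same skeleton with all leaves identical, and the unordered-isomorphism test detects whether a distinguished accepting marker appears anywhere along the exponentially long paths. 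The copying mechanism of non-linear grammars is precisely what lets the exponentially many branches be generated from polynomial-size rules, while the absence of order in $\uo(\cdot)$ forces the isomorphism test to ``search'' all branches.

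The main obstacle I expect is engineering the reduction so that the unordered-isomorphism question faithfully captures the semantic property of $M$: one must make sure that the two trees differ \emph{only} in a way that reflects acceptance, and that spurious local differences (arising from how configurations are encoded as subtrees) cannot accidentally create or destroy an isomorphism. A clean way to control this is to attach to each leaf a gadget subtree that encodes the full configuration at that leaf in a rigid, uniquely-decodable manner (so that two leaf gadgets are isomorphic as unordered trees iff they encode the same configuration), and then arrange that $G_1$ and $G_2$ agree on the tree structure but that $G_1$'s leaves include one extra ``accept'' label exactly when a halting accepting configuration is reachable. Verifying that this construction is polynomial-time computable is routine; verifying its correctness — that $\val_{\uo}(G_1)\cong\val_{\uo}(G_2)$ iff $M$ behaves as intended — is the technical heart and will require a careful invariant argument on the structure of the generated trees.
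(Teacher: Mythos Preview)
Your \EXPTIME{} upper bound is exactly the paper's argument: Lemma~\ref{lemma:ST-SLT} followed by Corollary~\ref{theo:iso-uo}.

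For the lower bound, the paper takes a cleaner and quite different route: a direct reduction from QBF rather than a Turing-machine simulation. For every subformula $\psi(v_1,\dots,v_m)$ of the input QBF it introduces two nonterminals $A_\psi(v_1,\dots,v_m)$ and $B_\psi(v_1,\dots,v_m)$ with the invariant that, for every $0/1$-assignment to the $v_i$, the unordered trees $\val(A_\psi)$ and $\val(B_\psi)$ are isomorphic iff $\psi$ evaluates to true. Literals are handled by tiny base productions; conjunction and $\forall$ use the classical \emph{and-gadget} for unordered tree isomorphism (two slots tagged $a$ and $b$, so $s\cong t$ iff $s_1\cong t_1 \wedge s_2\cong t_2$); disjunction and $\exists$ use the \emph{or-gadget} (the two slots are swapped in one side, so $s\cong t$ iff $s_1\cong t_1 \vee s_2\cong t_2$). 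The non-linearity of ST grammars is exactly what lets a single boolean variable $z_i$ be copied into every occurrence in the matrix, and what lets the or-gadget and the quantifier rules reuse $A_{\psi'}$ and $B_{\psi'}$ twice. Correctness is a one-line induction on subformulas; no configuration encoding, no rigidity gadgets.

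Your Turing-machine route is not wrong in principle, but it is substantially heavier and you have correctly identified the pain point yourself: you must engineer the two trees so that the \emph{only} difference visible to unordered isomorphism is the acceptance bit, which means rigidly encoding full configurations at every node and proving that the grammar faithfully simulates the transition function. None of that is needed in the QBF reduction. I would also flag that your sketch conflates two mechanisms: ``exponentially many branches'' (what you need for alternation) versus ``exponentially long paths'' (what you need for a deterministic \PSPACE{} computation); either can be made to work, but they lead to different constructions, and the proposal does not commit to one. The QBF reduction handles alternation natively via the and/or gadgets and avoids the issue entirely.
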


\begin{proof}
The upper bound follows from Lemma~\ref{lemma:ST-SLT}  and Corollary~\ref{theo:iso-uo}.
 For the lower bound, we use a reduction from QBF. Recall that the input for QBF is a quantified
 boolean formula of the form 
 \begin{equation} \label{formula:Psi}
 \Psi = Q_1 z_1 Q_2 z_2 \cdots Q_n z_n : \varphi(z_1, \ldots, z_n),
\end{equation}
 where $Q_i \in \{\forall, \exists\}$, the $z_i$ are boolean variables, and $\varphi(z_1, \ldots, z_n)$ 
 is a quantifier-free boolean formula. We can assume that in $\varphi$, negations only occur in front
 of variables.
 We use a reduction from the evaluation problem for boolean expressions to the isomorphism
 problem for explicitly given rooted unordered trees from \cite{JKMT03}.
 Let us take trees $s_1, s_2, t_1, t_2$.
 Consider the two trees $s$ and $t$ in Figure~\ref{fig:and} that are built up from $s_1, s_2, t_1, t_2$.
 Clearly, $s \cong t$ ($s$ and $t$ are isomorphic) if and only if $s_1 \cong t_1$ and $s_2 \cong t_2$.
 Similarly, for the trees $s$ and $t$ from Figure~\ref{fig:or} we have
 $s \cong t$ if and only if $s_1 \cong t_1$ or $s_2 \cong t_2$.
 
 \begin{figure}[tbhp]
    \centering
    \begin{tikzpicture}[node distance=.5cm, every node/.style={minimum height=.6cm}]
        \node (f1) {$f$};
        \node[below left=of f1] (a1) {$a$};
        \node[below right=of f1] (b1) {$b$};
        \begin{scope}[every node/.style={shape=isosceles triangle, shape border rotate=90,
            minimum height=1.5cm}]
            \node[shape=isosceles triangle,draw,below=of a1] (s1) {$s_1$};
            \node[shape=isosceles triangle,draw,below=of b1] (s2) {$s_2$};
        \end{scope}
        \node[below=3 of f1] (c1) {tree $s$};

        \node[right=5 of f1] (f2) {$f$};
        \node[below left=of f2] (a2) {$a$};
        \node[below right=of f2] (b2) {$b$};
        \begin{scope}[every node/.style={shape=isosceles triangle, shape border rotate=90,
            minimum height=1.5cm}]
            \node[shape=isosceles triangle,draw,below=of a2] (t1) {$t_1$};
            \node[shape=isosceles triangle,draw,below=of b2] (t2) {$t_2$};
        \end{scope}
        \node[below=3 of f2] (c2) {tree $t$};

        \draw (f1) -- (a1);
        \draw (f1) -- (b1);
        \draw (a1) -- (s1);
        \draw (b1) -- (s2);
        \draw (f2) -- (a2);
        \draw (f2) -- (b2);
        \draw (a2) -- (t1);
        \draw (b2) -- (t2);
    \end{tikzpicture}
    \caption{The and-gadget}
    \label{fig:and}
 \end{figure}
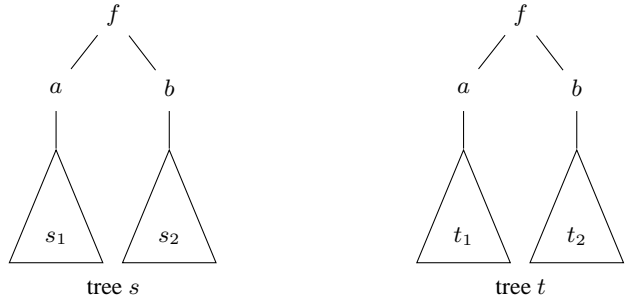
 \begin{figure}[tbhp]
    \centering
    \begin{tikzpicture}[node distance=.5cm, every node/.style={minimum height=.6cm}]
        \node (f11) {$f$};
        \node[below left=.5 and .8 of f11] (f21) {$f$};
        \node[below left=.25 and .2 of f21] (a11) {$a$};
        \node[below right=.25 and .2 of f21] (b11) {$b$};
        \begin{scope}[every node/.style={shape=isosceles triangle, shape border rotate=90,
            minimum height=1.2cm}]
            \node[shape=isosceles triangle,draw,below=of a11] (s11) {$s_1$};
            \node[shape=isosceles triangle,draw,below=of b11] (s12) {$t_2$};
        \end{scope}
        \node[below right=.5 and .8 of f11] (f31) {$f$};
        \node[below left=.25 and .2 of f31] (a12) {$a$};
        \node[below right=.25 and .2 of f31] (b12) {$b$};
        \begin{scope}[every node/.style={shape=isosceles triangle, shape border rotate=90,
            minimum height=1.2cm}]
            \node[shape=isosceles triangle,draw,below=of a12] (t11) {$t_1$};
            \node[shape=isosceles triangle,draw,below=of b12] (t12) {$s_2$};
        \end{scope}
        \node[below=3.7 of f11] (c1) {tree $s$};

        \draw (f11) -- (f21);
        \draw (f11) -- (f31);
        \draw (f21) -- (a11);
        \draw (f21) -- (b11);
        \draw (a11) -- (s11);
        \draw (b11) -- (s12);
        \draw (f31) -- (a12);
        \draw (f31) -- (b12);
        \draw (a12) -- (t11);
        \draw (b12) -- (t12);

        \node[right=5 of f11] (f12) {$f$};
        \node[below left=.5 and 0.8 of f12] (f22) {$f$};
        \node[below left=.25 and .2 of f22] (a21) {$a$};
        \node[below right=.25 and .2 of f22] (b21) {$b$};
        \begin{scope}[every node/.style={shape=isosceles triangle, shape border rotate=90,
            minimum height=1.2cm}]
            \node[shape=isosceles triangle,draw,below=of a21] (s21) {$s_1$};
            \node[shape=isosceles triangle,draw,below=of b21] (s22) {$s_2$};
        \end{scope}
        \node[below right=.5 and .8 of f12] (f32) {$f$};
        \node[below left=.25 and .2 of f32] (a22) {$a$};
        \node[below right=.25 and .2 of f32] (b22) {$b$};
        \begin{scope}[every node/.style={shape=isosceles triangle, shape border rotate=90,
            minimum height=1.2cm}]
            \node[shape=isosceles triangle,draw,below=of a22] (t21) {$t_1$};
            \node[shape=isosceles triangle,draw,below=of b22] (t22) {$t_2$};
        \end{scope}
        \node[below=3.7 of f12] (c1) {tree $t$};

        \draw (f12) -- (f22);
        \draw (f12) -- (f32);
        \draw (f22) -- (a21);
        \draw (f22) -- (b21);
        \draw (a21) -- (s21);
        \draw (b21) -- (s22);
        \draw (f32) -- (a22);
        \draw (f32) -- (b22);
        \draw (a22) -- (t21);
        \draw (b22) -- (t22);
    \end{tikzpicture}
    \caption{The or-gadget}
    \label{fig:or}
 \end{figure}
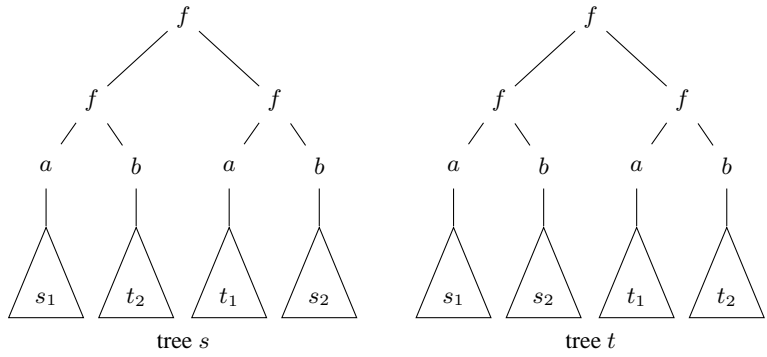

Fix the ranked alphabet $\Sigma=\{f,a,b,0,1\}$.
We will construct a non-linear ST grammar $G$ (without start variable), which contains 
for every subformula $\psi(v_1,\ldots,v_m)$ (where $\{ v_1, \ldots, v_m\} \subseteq 
\{ z_1, \ldots, z_n \}$ is the set of free variables of $\psi$) of the formula $\Psi$ from \eqref{formula:Psi}, two nonterminals
$A_\psi(v_1,\ldots,v_m)$ and $B_\psi(v_1,\ldots,v_m)$ such that for all truth values $c_1, \ldots, c_m \in \{0,1\}$:
$\psi(c_1,\ldots,c_m)$ evaluates to $1$ if and only if $\val_G(A_\psi)[ v_i \gets c_i \mid i \in [m]]$ and 
$\val_G(B_\psi)[ v_i \gets c_i \mid i \in [m]]$ are isomorphic as rooted unordered trees.

The base case is that of a literal $z$ or $\neg z$.
We introduce the following productions:
$$
A_{z}(z) \to f(z,1), \quad B_{z}(z) \to f(1,z), \quad A_{\neg z}(z) \to f(z,0), \quad B_{\neg z}(z) \to f(0,z) .
$$
Now let 
$\psi(v_1, \ldots, v_m) = \psi_1(x_1, \ldots, x_k) \wedge \psi_2(y_1, \ldots, y_l)$ be a subformula
of the quan\-tifier-free part $\varphi(z_1, \ldots, z_n)$ in \eqref{formula:Psi}, 
where $\{ v_1, \ldots, v_m\} = \{x_1, \ldots, x_k, y_1, \ldots, y_l\}$. Then 
we use the and-gadget from Figure~\ref{fig:and} and set (where $\bar v = (v_1, \ldots, v_m)$
and similarly for $\bar x$ and $\bar y$)
\begin{eqnarray*}
A_{\psi}(\bar v) & \to & f( a(A_{\psi_1}(\bar x)), b(A_{\psi_2}(\bar y))) \text{ and }\\
B_{\psi}(\bar v) & \to & f( a(B_{\psi_1}(\bar x)), b(B_{\psi_2}(\bar y))).
\end{eqnarray*}
If  $\psi(v_1, \ldots, v_m) = \psi_1(x_1, \ldots, x_k) \vee \psi_2(y_1, \ldots, y_l)$
then we use the or-gadget from Figure~\ref{fig:or} and set
\begin{eqnarray*}
A_{\psi}(\bar v) & \to & f( f(a(A_{\psi_1}(\bar x)), b(B_{\psi_2}(\bar y))),  f(a(B_{\psi_1}(\bar x)), b(A_{\psi_2}(\bar y))))  \text{ and }\\
B_{\psi}(\bar v) & \to & f( f(a(A_{\psi_1}(\bar x)), b(A_{\psi_2}(\bar y))),  f(a(B_{\psi_1}(\bar x)), b(B_{\psi_2}(\bar y)))) .
\end{eqnarray*}
For a quantified subformula  $\psi(z_1, \ldots, z_{i-1}) = \forall z_i \, \psi'(z_1, \ldots, z_{i-1}, z_i)$, we can define 
the productions similarly (let $\bar z = (z_1, \ldots, z_{i-1})$):
\begin{eqnarray*}
A_{\psi}(\bar z) & \to & f( a(A_{\psi'}(\bar z,0)), b(A_{\psi'}(\bar z,1))) \text{ and }\\
B_{\psi}(\bar z) & \to & f( a(B_{\psi'}(\bar z,0)), b(B_{\psi'}(\bar z,1))).
\end{eqnarray*}
Finally, for $\psi(z_1, \ldots, z_{i-1}) = \exists z_i \, \psi'(z_1, \ldots, z_{i-1}, z_i)$ we set
\begin{eqnarray*}
A_{\psi}(\bar z) & \to & f( f(a(A_{\psi'}(\bar z,0)), b(B_{\psi'}(\bar z,1))),  f(a(B_{\psi'}(\bar z,0)), b(A_{\psi'}(\bar z,1))))  \text{ and }\\
B_{\psi}(\bar z) & \to & f( f(a(A_{\psi'}(\bar z,0)), b(A_{\psi'}(\bar z,1))),  f(a(B_{\psi'}(\bar z,0)), b(B_{\psi'}(\bar z,1)))) .
\end{eqnarray*}
This concludes the construction of the ST grammar $G$. Let $G = (N,\Sigma,P)$.
Then we define the two ST grammars $G_1 = (N,\Sigma, A_{\Psi}, P)$ and 
$G_2 = (N,\Sigma, B_{\Psi}, P)$. We have $\val_{\uo}(G_1) \cong \val_{\uo}(G_2)$ if and only
if the formula $\Psi$ is true.
 \qed
\end{proof}
The complexity bounds from
Theorem~\ref{thm:non-linear-iso} also hold if we want to check whether the unrooted unordered trees
 $\val_{\ur,\uo}(G_1)$ and $\val_{\ur,\uo}(G_2)$ are isomorphic: Membership in \EXPTIME{} follows from
Lemma~\ref{lemma:ST-SLT}  and Corollary~\ref{coro-ur-uo-iso}.
 For \PSPACE-hardness, one can take the reduction from the 
proof of Theorem~\ref{thm:non-linear-iso} and label the roots of the final trees
with a fresh symbol.
Finally, the above \PSPACE-hardness proof can be also used for the bisimulation equivalence
problem for trees given by ST grammars (the gadgets from  Figure~\ref{fig:and} and~\ref{fig:or}
can be reused). Hence, bisimulation equivalence for trees given by ST grammars is 
\PSPACE-hard and in \EXPTIME. Since an ST grammar can be transformed into a 
hierarchical graph definition for a dag (see the proof of Lemma~\ref{lemma:ST-SLT}),
we rediscover the following result from  \cite{DBLP:conf/concur/BrenguierGS12}:
Bisimulation equivalence for dags that are given by hierarchical graph definitions is
\PSPACE-hard and in \EXPTIME.

\section{Open problems}

The obvious remaining open problem is the precise complexity of the isomorphism problem for unordered trees that
are given by ST grammars. Theorem~\ref{thm:non-linear-iso} leaves a gap from \PSPACE{} to \EXPTIME{}.
Another interesting open problem is the isomorphism problem for graphs that are given by hierarchical graph definitions.
To the knowledge of the authors, this problem has not been studied so far.


\end{document}